\numberwithin{equation}{section}
\newtheorem{thm}{\bf Theorem}[section]
\newtheorem{lem}[thm]{\bf Lemma}
\theoremstyle{remark}
\newtheorem{rem}{\bf Remark}[section]
\title{Stability of equilibrium states in the Zhukovski case of heavy gyrostat using algebraic methods}
\author{Dan Com\u anescu\\
{\small Department of Mathematics, West University of Timi\c soara}\\
{\small Bd. V. P\^ arvan, No 4, 300223 Timi\c soara, Rom\^ ania}\\
{\small E-mail addresses: comanescu@math.uvt.ro}}
\date{}
\begin{document}

\maketitle

\begin{abstract}
We study the stability of the equilibrium points of a skew product system. We analyze the possibility to construct a Lyapunov function using a set of conserved quantities and solving an algebraic system. We apply the theoretical results to study the stability of an equilibrium state of a heavy gyrostat in the Zhukovski case.
\end{abstract}

\noindent {\bf MSC 2010}: 34D20, 37B25, 70E50, 70H14.

\noindent \textbf{Keywords:} rigid body, gyrostat, stability.

\section{Introduction}

A classical problem in mechanics is the problem of the rotation of a heavy rigid body with a fixed point. If the rigid body is also acted by a gyrostatic torque we obtain the problem of the heavy gyrostat with a fixed point. The rotation of a heavy gyrostat is governed by the differential system, see \cite{birtea-casu-comanescu} and \cite{gavrilov},
\begin{equation}\label{sistem general}
\left\{%
\begin{array}{ll}
\dot{\vec{M}}=(\vec{M}+\vec{\mu})\times \mathbb{I}^{-1}\vec{M} +m\vec{\gamma}\times\vec{r}_G \\
\dot{\vec{\gamma}}=\vec{\gamma}\times\mathbb{I}^{-1}\vec{M},\end{array}%
\right.
\end{equation}
where $m$ is the mass of the gyrostat, $\vec{r}_G$ is the vector with the initial point in the fixed point $O$ and the terminal point in the center of gravity $G$, $\mathbb{I}$ is the moment of inertia tensor at the point $O$, $\vec{M}$ is the angular momentum vector, $\vec{\gamma}$ is the direction of the gravitational field and $\vec{\mu}$ is the constant vector of gyrostatic moment. In the case when $G$ coincide with the fixed point $O$ we have $\vec{r}_G=\vec{0}$ and we obtain the well-known Zhukovski case, see \cite{gavrilov}. The mathematical model of the rotation of a heavy gyrostat in the Zhukovski case is
\begin{equation}\label{M}
\left\{%
\begin{array}{ll}
\dot{\vec{M}}=(\vec{M}+\vec{\mu})\times \mathbb{I}^{-1}\vec{M}\\
\dot{\vec{\gamma}}=\vec{\gamma}\times\mathbb{I}^{-1}\vec{M}.\end{array}%
\right.
\end{equation}
This system admits a Hamilton-Poisson formulation, see \cite{birtea-casu-comanescu} and \cite{gavrilov}. The Poisson bracket is a modified "$-$" Kirillov-Kostant-Souriau bracket on the dual Lie algebra $e(3)^*$. In the coordinates $(M_1,M_2,M_3,\gamma_1,\gamma_2,\gamma_3)$ the matrix associated with this Poisson bracket has the expression
$$\Pi_{\vec{\mu}}(\vec{M},\vec{\gamma})=\left[\begin{array}{cc}
\widehat{\vec{M}+\vec{\mu}} & \widehat{\vec{\gamma}} \\
\widehat{\vec{\gamma}} & \mathcal{O}_3
\end{array}\right],$$
where the matrix $\widehat{v}=\left(
            \begin{array}{ccc}
              0 & -r & q \\
              r & 0 & -p \\
              -q & p & 0 \\
            \end{array}
          \right)$ is defined by the vector $\vec{v}$ which has the components $(p,q,r)$.
If "$\,\cdot\,$" is the scalar product, then we have four conserved quantities: the Hamiltonian function $H$, the Casimir functions $C_1$ and $C_2$ and a fourth conserved quantity $F$. These functions are:
$$H=\frac{1}{2}\vec{M}\cdot \mathbb{I}^{-1}\vec{M},\,\,\,C_1=\frac{1}{2}\vec{\gamma}\cdot\vec{\gamma},\,\,\,C_2=(\vec{M}+\vec{\mu})\cdot \vec{\gamma},\,\,\,\text{and}\,\,\,F=\frac{1}{2}(\vec{M}+\vec{\mu})\cdot (\vec{M}+\vec{\mu}).$$

The system of the rotation of a heavy gyrostat in the Zhukovski case \eqref{M} is a skew product system of differential equations of the form
\begin{equation}\label{semi-decoupled}
\left\{
  \begin{array}{ll}
   \dot{y}=g(y) \\
   \dot{z}=h(y,z)
  \end{array}
\right.
\end{equation}
where $g:D_1\subset \mathbb{R}^{m}\rightarrow \mathbb{R}^{m}$ and $h:D_1\times D_2\subset \mathbb{R}^{m}\times \mathbb{R}^{n-m}\rightarrow \mathbb{R}^{n-m}$ are locally Lipschitz functions and $D_1, D_2$ are open sets. By analogy with the case of the rotations of heavy gyrostat in the Zhukovski case, we suppose that we have complete information on the stability for the reduced differential system
\begin{equation}\label{reduced-system}
    \dot{y}=g(y).
\end{equation}
 Assume that we have $q\geq 1$ conserved quantities $F_1=F_1(y),...,F_q=F_q(y)$ for \eqref{reduced-system}. These functions are conserved quantities for \eqref{semi-decoupled}. We suppose that we have also $k-q\geq 1$ conserved quantities $F_{q+1}=F_{q+1}(y,z),..., F_{k}=F_{k}(y,z)$ for the system \eqref{semi-decoupled}.

In Section 2 we study the stability of the equilibrium points of a differential system of the form \eqref{semi-decoupled}. First, we present a result about the spectral stability of equilibrium points of a differential system of the form \eqref{semi-decoupled}. Using the theory described in \cite{comanescu}, we prove a theorem about the stability with respect to the set of conserved quantities $\{F_1,...,F_k\}$ of an equilibrium point of a system of the form \eqref{semi-decoupled}. An equilibrium point is stable with respect to the set of conserved quantities $\{F_1,...,F_k\}$ if and only if we can construct, in a neighborhood of the equilibrium point, a Lyapunov function using the conserved quantities. At the end of this section we study the partial stability of an equilibrium point. We present a theorem about $y$-instability and $z$-instability of an equilibrium point $(y_e,z_e)$ in the case when the equilibrium point belongs to a certain invariant subset.

In Section 3 we present stability results of the uniform rotations of a torque-free gyrostat. These results have been proved in \cite{comanescu}.

In Section 4 we apply the theoretical results proved in Section 2 for studying the stability of the equilibrium states of a heavy gyrostat. We consider the Zhukovski case and suppose that $\vec{\mu}$, the vector of gyrostatic moment, is situated along a principal axis of inertia of the gyrostat. First, we give a list of equilibrium states and their spectral stability. Second, we analyze the stability of the equilibrium states with respect to the set of conserved quantities $\{H,C_1,C_2,F\}$. The  results of spectral stability and stability with respect to the set of conserved quantities are used to decide the Lyapunov stability of the equilibrium states. There exists some equilibrium states which are spectrally stable and they are not stable with respect to the set of conserved quantities. For a part of them we can prove the Lyapunov instability by using the exact expression of some solutions of \eqref{M}. For the complementary part of them we can not decide the Lyapunov stability or instability. At the end of the section we study the $\vec{M}$-stability and $\vec{\gamma}$-stability of the equilibrium states which is an important practical problem.

\section{The stability of equilibria of a skew product system}

 In this section we study the stability of the equilibrium points ofa differential system of the form \eqref{semi-decoupled}.
The spectral stability of an equilibrium point is a necessary condition for Lyapunov stability. According to \cite{marsden-ratiu}, pp. 32, an equilibrium point $(y_e,z_e)$ is spectrally stable if the linearization of $(g,h)$ at $(y_e,z_e)$ has all the eigenvalues with nonpositive real parts. For our differential system \eqref{semi-decoupled} the linearization at $(y_e,z_e)$ is
$$\frac{\partial (g,h)}{\partial (y,z)}(y_e,z_e)=\left(
                \begin{array}{cc}
                  \frac{\partial g}{\partial y}(y_e) & \mathbf{O}_{m,n-m} \\
                   \frac{\partial h}{\partial y}(y_e,z_e) & \frac{\partial h}{\partial z}(y_e,z_e) \\
                \end{array}
              \right).$$
If $A$ is a square matrix we note by $Sp(A)$ the set of eigenvalues of $A$.
\begin{lem} Let $(y_e,z_e)$ be an equilibrium point of \eqref{semi-decoupled}, then we have
 $Sp(\frac{\partial (g,h)}{\partial (y,z)}(y_e,z_e))=Sp(\frac{\partial g}{\partial y}(y_e))\cup Sp(\frac{\partial h}{\partial z}(y_e,z_e))$.
\end{lem}

\begin{proof}
We remind that if $A$ is a square matrix with $m$ columns, $C$ is a square matrix with $n-m$ columns and $B$ is a matrix with $m$ columns and $n-m$ rows then $\det \left(
             \begin{array}{cc}
               A & \mathbf{O}_{m,n-m} \\
               B & C \\
             \end{array}
           \right)=\det A\det C.$
\end{proof}

In the following we study the stability of equilibrium points with respect to a set of conserved quantities. The stability of an equilibrium point with respect to a set of conserved quantities is a sufficient condition for Lyapunov stability. If an equilibrium point is not stable with respect to a set of conserved quantities, then we cannot construct a Lyapunov function by using this set of conserved quantities. We remind some theoretical considerations, from the paper \cite{comanescu}. We consider an open set $D\subset\mathbb{R}^n$ and the locally Lipschitz function $f:D\rightarrow \mathbb{R}^n$ which generates the differential equation
\begin{equation}\label{ecuatie-diferentiala}
    \dot{x}=f(x)
\end{equation}
Let $x_e$ be an equilibrium point. A continuous function $V:D\rightarrow \mathbb{R}$ which satisfies $V(x_e)=0$ and $V(x)>0$ for every $x$ in a neighborhood of $x_e$ and $x\neq x_e$ is called a positive definite function in the equilibrium point $x_e$.
{\it The equilibrium point $x_e$ of \eqref{ecuatie-diferentiala} is stable with respect to the set of conserved quantities $\{F_1,...,F_k\}$ if there exists a continuous function $\Phi:\mathbb{R}^k\rightarrow \mathbb{R}$ such that $x\rightarrow \Phi(F_1,....,F_k)(x)-\Phi(F_1,....,F_k)(x_e)$ is a positive definite function in $x_e$.}
In the conditions of the above definition the function $x\rightarrow \Phi(F_1,....,F_k)(x)-\Phi(F_1,....,F_k)(x_e)$ is a Lyapunov function in the equilibrium point $x_e$ and we have the following results.
\begin{thm}\label{implication-stability}
If the equilibrium point $x_e$ of \eqref{ecuatie-diferentiala} is stable with respect to the set of conserved quantities $\{F_1,...,F_k\}$ then it is stable in the sense of Lyapunov.
\end{thm}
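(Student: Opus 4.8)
The plan is to verify directly that the function
$$V(x)=\Phi(F_1,\ldots,F_k)(x)-\Phi(F_1,\ldots,F_k)(x_e)$$
supplied by the hypothesis is a Lyapunov function at $x_e$, and then to run the classical $\varepsilon$--$\delta$ trapping argument. By the definition of stability with respect to $\{F_1,\ldots,F_k\}$, this $V$ is continuous, satisfies $V(x_e)=0$, and is strictly positive on a punctured neighborhood of $x_e$. Since $f$ is locally Lipschitz, \eqref{ecuatie-diferentiala} has, through each point of $D$, a unique solution on a maximal interval, so the only thing to establish before invoking the Lyapunov mechanism is the behaviour of $V$ along the flow.

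First I would observe that $V$ is itself a conserved quantity. Each $F_i$ is constant along any solution $x(t)$, hence the tuple $(F_1(x(t)),\ldots,F_k(x(t)))$ is constant in $t$, and composing with the continuous map $\Phi$ shows that $t\mapsto V(x(t))$ is constant. Crucially this uses only continuity of $\Phi$, not differentiability, so I would phrase the decay condition in its integral form, namely that $V$ is nonincreasing (here in fact constant) along solutions, rather than as $\dot V\le 0$. This is what sidesteps any need to differentiate $\Phi$.

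Then comes the trapping step. Fix $\varepsilon>0$ small enough that $\overline{B}(x_e,\varepsilon)\subset D$ and $V>0$ on $\overline{B}(x_e,\varepsilon)\setminus\{x_e\}$. The sphere $\{\,\|x-x_e\|=\varepsilon\,\}$ is compact and $V$ is continuous and strictly positive there, so $\ell:=\min_{\|x-x_e\|=\varepsilon}V(x)>0$. Using continuity of $V$ at $x_e$ together with $V(x_e)=0$, I would choose $\delta\in(0,\varepsilon)$ with $V(x)<\ell$ whenever $\|x-x_e\|<\delta$. For any initial condition $x_0$ with $\|x_0-x_e\|<\delta$, conservation gives $V(x(t))=V(x_0)<\ell$ throughout the maximal interval of existence; since $V\ge\ell$ on the sphere, the continuous curve $x(t)$ can never meet the sphere, and, starting inside the ball, it must remain in $\overline{B}(x_e,\varepsilon)$. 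This is exactly Lyapunov stability, and it also yields forward completeness, as the trajectory stays in a compact subset of $D$.

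The argument is essentially the standard Lyapunov stability theorem, so I do not anticipate a genuine obstacle. The one point deserving care is that $\Phi$ is assumed merely continuous: the textbook formulation differentiates $V$ along the flow, which is unavailable here, so the clean route is the invariance/level-set version above, in which conservation of $V$ replaces the sign condition on $\dot V$. Beyond that, the only technical ingredients are the compactness extraction of $\ell>0$ and the continuity-in-time argument forbidding a crossing of the sphere.
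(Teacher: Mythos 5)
Your proof is correct and is essentially the paper's own approach: the paper simply remarks that $x\mapsto \Phi(F_1,\ldots,F_k)(x)-\Phi(F_1,\ldots,F_k)(x_e)$ is a Lyapunov function at $x_e$ and defers the details to \cite{comanescu}, and your argument is exactly the standard justification of that remark. In particular, you correctly handle the only delicate point—$\Phi$ is merely continuous, so the decay condition must be used in its invariance form ($V$ constant along solutions) together with the compactness/trapping argument, rather than via $\dot V\le 0$.
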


\begin{thm}\label{stability}
Let $x_e$ be an equilibrium point of \eqref{ecuatie-diferentiala} and $\{F_1,...,F_k\}$ a set of conserved quantities. The following statements are equivalent:
\begin{itemize}
\item[(i)] $x_e$ is stable with respect to the set of conserved quantities $\{F_1,...,F_k\}$;
\item [(ii)] $x\rightarrow ||(F_1,...,F_k)(x)-(F_1,...,F_k)(x_e)||$ is a positive definite function in $x_e$;
\item [(iii)] the system $F_1(x)=F_1(x_e),...,F_k(x)=F_k(x_e)$ has no root besides $x_e$ in some neighborhood of $x_e$.
\end{itemize}
\end{thm}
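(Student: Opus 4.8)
The plan is to establish the three equivalences through the cycle (ii) $\Leftrightarrow$ (iii), then (ii) $\Rightarrow$ (i), and finally (i) $\Rightarrow$ (iii), which closes the loop. Throughout I would write $\mathbf{F}=(F_1,\dots,F_k):D\to\mathbb{R}^k$ and abbreviate the vector of equilibrium values by $\mathbf{F}(x_e)$, so that the function in (ii) is exactly $V_0(x):=\|\mathbf{F}(x)-\mathbf{F}(x_e)\|$. Note that $V_0(x_e)=0$ automatically, so the whole content of (ii) is the positivity $V_0(x)>0$ for $x\neq x_e$ in a neighborhood of $x_e$.

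First, the equivalence (ii) $\Leftrightarrow$ (iii) should follow immediately from the elementary fact that a vector has zero norm if and only if it is the zero vector. Indeed, $V_0(x)=0$ is equivalent to $\mathbf{F}(x)=\mathbf{F}(x_e)$, i.e.\ to $x$ being a root of the system $F_1(x)=F_1(x_e),\dots,F_k(x)=F_k(x_e)$. Hence ``$V_0(x)>0$ for all $x\neq x_e$ in some neighborhood of $x_e$'' says precisely that this system has no root other than $x_e$ in that neighborhood, which is (iii).

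Next, for (ii) $\Rightarrow$ (i) I would exhibit the required continuous $\Phi$ explicitly by taking the Euclidean distance to the equilibrium value: set $\Phi(u):=\|u-\mathbf{F}(x_e)\|$ for $u\in\mathbb{R}^k$. This $\Phi$ is continuous, and $\Phi(F_1,\dots,F_k)(x)-\Phi(F_1,\dots,F_k)(x_e)=V_0(x)-0=V_0(x)$, which is positive definite at $x_e$ by (ii). This is exactly the definition of $x_e$ being stable with respect to $\{F_1,\dots,F_k\}$.

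Finally, the implication (i) $\Rightarrow$ (iii) — which I expect to be the conceptual heart of the argument, though more a matter of the right observation than of genuine difficulty — is cleanest by contraposition. The key remark is that any candidate function built from the conserved quantities is constant on the level set $\{\mathbf{F}=\mathbf{F}(x_e)\}$: if $\mathbf{F}(x)=\mathbf{F}(x_e)$, then for every continuous $\Phi$ one has $\Phi(F_1,\dots,F_k)(x)-\Phi(F_1,\dots,F_k)(x_e)=0$. So if (iii) fails, every neighborhood of $x_e$ contains a point $x\neq x_e$ with $\mathbf{F}(x)=\mathbf{F}(x_e)$, at which the function in the definition of (i) vanishes instead of being strictly positive; hence no $\Phi$ can render it positive definite, so (i) fails. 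The only point demanding care is that positive definiteness is a purely local notion quantifying over a neighborhood, so the contraposition must supply offending points in \emph{every} neighborhood of $x_e$ — which is exactly what the negation of (iii) provides.
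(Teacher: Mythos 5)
Your proposal is correct: all three steps are sound, and the cycle (i) $\Rightarrow$ (iii) $\Rightarrow$ (ii) $\Rightarrow$ (i) closes, so the equivalence is fully established. For the record, the paper itself contains no proof of this theorem --- it is quoted from \cite{comanescu}, and the proof of the implication from (iii) to Lyapunov stability is attributed to \cite{aeyels} --- so there is nothing in the text to compare against; your argument stands on its own. It is also the natural one: the explicit witness $\Phi(u)=\|u-(F_1,\dots,F_k)(x_e)\|$ for (ii) $\Rightarrow$ (i), the observation that $\|v\|=0$ iff $v=0$ for (ii) $\Leftrightarrow$ (iii), and, for (i) $\Rightarrow$ (iii), the key remark that \emph{any} function of the conserved quantities vanishes identically on the level set $\{x\,:\,(F_1,\dots,F_k)(x)=(F_1,\dots,F_k)(x_e)\}$, so a failure of (iii) kills positive definiteness for every admissible $\Phi$ at once. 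You were right to flag the local quantifier: the negation of (iii) supplies offending points in every neighborhood of $x_e$, which is exactly what is needed to contradict positive definiteness rather than merely positivity at one point. The only implicit hypothesis you rely on is the continuity of the $F_i$ (needed so that the function in (ii) is continuous, as the paper's definition of positive definiteness requires); that is part of the standing assumptions on conserved quantities here, so the proof is complete.
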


The Theorem \ref{stability} $(iii)$ offer an algebraic method to prove the Lyapunov stability of an equilibrium point. An interesting proof of this implication is presented in \cite{aeyels}. This method was used in \cite{comanescu} in the problem of stability of uniform rotations of a torque-free gyrostat.

In this paper we study a differential system of the form \eqref{semi-decoupled}.

\begin{thm}\label{semi-decoupled-stability-theory}
Let $(y_e,z_e)$ be an equilibrium point of \eqref{semi-decoupled}.
\begin{itemize}
\item [a)]
Suppose that the equilibrium point $y_e$ of the system \eqref{reduced-system} is stable with respect to the set of conserved quantities $\{F_1,...,F_q\}$. The following statements are equivalent:

\item [a.1.)] The equilibrium point $(y_e,z_e)$ of the system \eqref{semi-decoupled} is stable with respect to the set of conserved quantities $\{F_1,...,F_k\}$;
\item [a.2.)] The solution $z_e$ of the algebraic system $F_{q+1}(y_e,z)=F_{q+1}(y_e,z_e),...,F_k(y_e,z)=F_k(y_e,z_e)$, with the unknown $z$, is isolated in the set of all the solutions.
\item [b)] If the equilibrium point $y_e$ of the system \eqref{reduced-system} is Lyapunov unstable then the equilibrium point $(y_e,z_e)$ of the system \eqref{semi-decoupled} is not stable with respect to the set of conserved quantities  $\{F_1,...,F_k\}$.
\end{itemize}
\end{thm}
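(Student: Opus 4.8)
The plan is to make everything rest on the equivalence (i) $\Leftrightarrow$ (iii) of Theorem \ref{stability}, which translates ``stability with respect to a set of conserved quantities'' into the purely algebraic statement that the equilibrium is an isolated point of the common level set of those quantities. This converts both parts into bookkeeping about level sets, and the whole mechanism is driven by the fact that $F_1,\dots,F_q$ depend on $y$ alone.

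For part a), I would apply Theorem \ref{stability} (iii) to the full system \eqref{semi-decoupled} at $(y_e,z_e)$ and to the reduced system \eqref{reduced-system} at $y_e$. The hypothesis that $y_e$ is stable with respect to $\{F_1,\dots,F_q\}$ gives, through (iii) for \eqref{reduced-system}, a neighborhood $U_1$ of $y_e$ on which the equations $F_1(y)=F_1(y_e),\dots,F_q(y)=F_q(y_e)$ force $y=y_e$. On a product neighborhood $U_1\times U_2$ the full level set $\{F_i(y)=F_i(y_e),\ F_j(y,z)=F_j(y_e,z_e)\}$ therefore collapses to $\{y_e\}\times S$, where $S=\{z:\ F_j(y_e,z)=F_j(y_e,z_e),\ q<j\le k\}$ is the solution set of the algebraic system in a.2.). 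The implication a.1.)~$\Rightarrow$~a.2.) then follows at once (and, notably, without using the hypothesis on $y_e$): any $z$ near $z_e$ solving the algebraic system yields a point $(y_e,z)$ of the full level set, which by (iii) must be $(y_e,z_e)$, so $z=z_e$. For the converse a.2.)~$\Rightarrow$~a.1.), isolatedness of $z_e$ in $S$ together with the collapse of the $y$-directions shows the full level set meets $U_1\times U_2$ only at $(y_e,z_e)$, and (iii) for \eqref{semi-decoupled} returns stability with respect to $\{F_1,\dots,F_k\}$.

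For part b), I would argue by contraposition: assuming $(y_e,z_e)$ is stable with respect to $\{F_1,\dots,F_k\}$, I will deduce that $y_e$ is Lyapunov stable for \eqref{reduced-system}. By Theorem \ref{implication-stability} the assumption yields Lyapunov stability of $(y_e,z_e)$ for the full system. The structural point is that in \eqref{semi-decoupled} the equation $\dot y=g(y)$ does not involve $z$, so by uniqueness (guaranteed by local Lipschitzness of $g$) the $y$-component of the full solution starting at $(y_0,z_e)$ is exactly the reduced solution starting at $y_0$. Given $\varepsilon>0$, the $\delta$ from full-system stability, applied to initial data $(y_0,z_e)$ with $\|y_0-y_e\|<\delta$, confines the reduced solution by $\|y(t)-y_e\|\le \|(y(t),z(t))-(y_e,z_e)\|<\varepsilon$ for all $t\ge 0$, which is precisely Lyapunov stability of $y_e$. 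Contraposing gives b).

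The step I expect to require the most care is the collapse argument in part a): one must fix the product neighborhood so that $U_1$ genuinely forces $y=y_e$ \emph{before} the $z$-equations are read off, and keep straight that only the direction a.2.)~$\Rightarrow$~a.1.) consumes the stability hypothesis on $y_e$. Part b) is structurally simple but hinges on the embedding $(y_0,z_e)$ and the decoupled $y$-dynamics; the only thing to verify is that uniqueness legitimately identifies the two $y$-flows.
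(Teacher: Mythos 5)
Your proposal is correct and takes essentially the same route as the paper: part a) rests on the equivalence (i) $\Leftrightarrow$ (iii) of Theorem \ref{stability} combined with the fact that $F_1,\dots,F_q$ depend only on $y$ (the paper phrases the isolation argument via a sequence of solutions and a contradiction, you phrase it via product neighborhoods, which is the same argument), and your part b) is exactly the contrapositive of the paper's reasoning, which combines Theorem \ref{implication-stability} with the lifting of Lyapunov instability of $y_e$ to $(y_e,z_e)$ through the decoupled $y$-dynamics. The only difference is that you spell out the uniqueness argument behind the step the paper dismisses as ``easy to see,'' which is a welcome but inessential refinement.
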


\begin{proof} $a)$ The implication $a.1.)\Rightarrow a.2.)$ is trivial.

 Suppose that the proposition $a.2.)$ is true and $(y_e,z_e)$ is not stable with respect to the set of conserved quantities $\{F_1,...,F_k\}$. By using the Theorem \ref{stability} there exists a sequence $(y_n,z_n)_{n\in\mathbb{N}}$ of solutions of the algebraic system $F_1(y,z)=F_1(y_e,z_e),...,F_k(y,z)=F_k(y_e,z_e)$ such that
$(y_n,z_n)\rightarrow (y_e,z_e)$ and $(y_n,z_n)\neq (y_e,z_e)$. All the terms of the sequence $(y_n)_{n\in\mathbb{N}}$ are solutions of the algebraic system $F_1(y)=F_1(y_e),...,F_q(y)=F_q(y_e)$. Because the equilibrium point $y_e$ of the system \eqref{reduced-system} is stable with respect to the set of conserved quantities $\{F_1,...,F_q\}$ we deduce the existence of an index $n_1\in \mathbb{N}$ such that $y_n=y_e$ if $n\geq n_1$. We have that $z_n$, with $n>n_1$, are solutions of the algebraic system $F_{q+1}(y_e,z)=F_{q+1}(y_e,z_e),...,F_k(y_e,z)=F_k(y_e,z_e)$ and $z_n\neq z_e$. We have a contradiction and consequently we obtain the enounced result.

b) By using the definition of the Lyapunov stability it is easy to see the following result: "if the equilibrium point $y_e$ of the system \eqref{reduced-system} is Lyapunov unstable then, the equilibrium point $(y_e,z_e)$ of the system \eqref{semi-decoupled} is Lyapunov unstable". Suppose that the equilibrium point $(y_e,z_e)$ of the system \eqref{semi-decoupled} is stable with respect to the set of conserved quantities $\{F_1,...,F_k\}$ and we deduce, by using the Theorem \ref{implication-stability}, the Lyapunov stability of $(y_e,z_e)$. We have a contradiction and consequently, the equilibrium point $(y_e,z_e)$ of the system \eqref{semi-decoupled} is not stable with respect to the set of conserved quantities  $\{F_1,...,F_k\}$.
\end{proof}

\begin{rem}
The above theorem does not offer information about the stability with respect to the set of conserved quantities $F_1,...,F_k$ of an equilibrium point $(y_e,z_e)$ in the case when $y_e$ is a Lyapunov stable equilibrium of \eqref{reduced-system} but it is not stable with respect to the set of conserved quantities $F_1,...,F_q$.
\end{rem}

In practical problems, the study of stability of an equilibrium point of a differential system \eqref{semi-decoupled} implies the study of partial stability with respect to $y$ and $z$. According to \cite{rouche}, pp. 15, the equilibrium point $(y_e,z_e)$ is $y$-stable (respectively $z$-stable) if for all $\varepsilon>0$ there exists $\delta>0$ such that $||y(t, y_0,z_0)-y_e||<\varepsilon $ (respectively $||z(t, y_0,z_0)-z_e||<\varepsilon$) for all initial conditions $||(y_0,z_0)-(y_e,z_e)||<\delta$ and all $t\geq 0$. About the partial stability of the equilibrium points of a skew product system we have the following properties.

\begin{thm}\label{partial-semi-decoupled-generality}
Let $(y_e,z_e)$ be an equilibrium point of \eqref{semi-decoupled}.
\begin{itemize}
\item [(i)] If $(y_e,z_e)$ is Lyapunov stable, then it is $y$-stable and $z$-stable.
\item [(ii)] The equilibrium point $y_e$ of \eqref{reduced-system} is Lyapunov stable if and only if $(y_e,z_e)$ is $y$-stable.
\item [(iii)] If the equilibrium point $y_e$ of \eqref{reduced-system} is Lyapunov stable, then $(y_e,z_e)$ is Lyapunov stable if and only if it is $z$-stable.
\end{itemize}

\end{thm}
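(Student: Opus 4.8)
The plan is to exploit the one structural feature that distinguishes a skew product system from a general one: in \eqref{semi-decoupled} the evolution of $y$ is governed by $\dot y = g(y)$ alone, so the $y$-component of the solution does not depend on the initial value $z_0$. I would record this at the outset as the identity $y(t,y_0,z_0)=y(t,y_0)$, where the right-hand side denotes the solution of the reduced system \eqref{reduced-system} issued from $y_0$; everything else follows from this observation together with elementary product-norm estimates (recall that $\|y-y_e\|$ and $\|z-z_e\|$ are each bounded above by $\|(y,z)-(y_e,z_e)\|$, and that the product norm is equivalent to the sum of the component norms).

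For part (i) I would argue directly from the definition of Lyapunov stability. Given $\varepsilon>0$, choose $\delta>0$ so that $\|(y_0,z_0)-(y_e,z_e)\|<\delta$ forces $\|(y(t),z(t))-(y_e,z_e)\|<\varepsilon$ for all $t\ge 0$. Since $\|y(t)-y_e\|$ and $\|z(t)-z_e\|$ are each dominated by $\|(y(t),z(t))-(y_e,z_e)\|$, the same $\delta$ witnesses both $y$-stability and $z$-stability.

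Part (ii) is where the structural identity does the work. For the implication ``$y_e$ Lyapunov stable $\Rightarrow$ $(y_e,z_e)$ is $y$-stable'', given $\varepsilon>0$ take the $\delta$ provided by Lyapunov stability of $y_e$; if $\|(y_0,z_0)-(y_e,z_e)\|<\delta$ then $\|y_0-y_e\|<\delta$, and because $y(t,y_0,z_0)=y(t,y_0)$ the reduced estimate $\|y(t,y_0)-y_e\|<\varepsilon$ transfers verbatim. For the converse, given $\varepsilon>0$ take the $\delta$ from $y$-stability and, for any $y_0$ with $\|y_0-y_e\|<\delta$, consider the full initial condition $(y_0,z_e)$; then $\|(y_0,z_e)-(y_e,z_e)\|=\|y_0-y_e\|<\delta$, so $\|y(t,y_0)-y_e\|=\|y(t,y_0,z_e)-y_e\|<\varepsilon$, giving Lyapunov stability of $y_e$. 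The only point requiring care is the choice $z_0=z_e$, which manufactures a trajectory of the full system whose $y$-projection is an arbitrary prescribed trajectory of the reduced system.

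For part (iii) the forward implication is already contained in (i). For the converse, assuming $y_e$ is Lyapunov stable and $(y_e,z_e)$ is $z$-stable, I would first invoke (ii) to conclude that $(y_e,z_e)$ is also $y$-stable, and then combine $y$- and $z$-stability by the triangle inequality: given $\varepsilon>0$, pick $\delta_1$ for $y$-stability at level $\varepsilon/2$ and $\delta_2$ for $z$-stability at level $\varepsilon/2$, and set $\delta=\min(\delta_1,\delta_2)$; for $\|(y_0,z_0)-(y_e,z_e)\|<\delta$ one obtains $\|y(t)-y_e\|<\varepsilon/2$ and $\|z(t)-z_e\|<\varepsilon/2$, hence $\|(y(t),z(t))-(y_e,z_e)\|<\varepsilon$ by norm equivalence. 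The main obstacle throughout is not analytic depth but bookkeeping with the product norm and, above all, keeping track of the fact that perturbations of $z_0$ leave the $y$-trajectory unchanged; once that independence is isolated, all three statements reduce to $\varepsilon$--$\delta$ manipulations.
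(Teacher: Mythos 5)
Your proof is correct. The paper states Theorem \ref{partial-semi-decoupled-generality} without any proof, treating it as elementary, so there is nothing to diverge from; your argument---isolating the skew-product identity $y(t,y_0,z_0)=y(t,y_0)$ guaranteed by uniqueness of solutions, then reducing all three parts to $\varepsilon$--$\delta$ estimates in the product norm, with the choice $z_0=z_e$ in the converse of (ii) being the one step that genuinely uses the structure---is precisely the justification the author leaves implicit.
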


The above theorem does not offer sufficient information about $z$-stability of an equilibrium points. We present a method to prove $z$-instability in a particular case which will appear in our concrete problem. This method assumes the existence of a certain invariant set of the dynamics generated by \eqref{semi-decoupled}. The invariant set can by found by the method presented in the paper \cite{birtea-comanescu}.
For the following considerations we suppose that $n=2 m$ and there exists the $C^1$-function $s:\mathbb{R}^{2m}\rightarrow \mathbb{R}^m$ such that the set
\begin{equation}\label{}
    \mathcal{M}=\{(y,z)\,|\,s(y,z)=0\}
\end{equation}
is invariant under the dynamics generated by \eqref{semi-decoupled}.

\begin{thm}\label{instability}
Let $(y_e,z_e)\in \mathcal{M}$ be an equilibrium point of \eqref{semi-decoupled}. Suppose that the matrices $\frac{\partial s}{\partial y}(y_e,z_e)$ and $\frac{\partial s}{\partial z}(y_e,z_e)$ are invertible. If $y_e$ is not Lyapunov stable for the dynamics generated by \eqref{reduced-system} then $(y_e,z_e)$ is not $y$-stable and it is not $z$-stable.
\end{thm}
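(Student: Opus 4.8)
The plan is to treat the two conclusions separately, since the $y$-instability is essentially free while the $z$-instability carries the whole weight of the statement. For $y$-instability I would simply invoke Theorem~\ref{partial-semi-decoupled-generality}(ii), which asserts that $y_e$ is Lyapunov stable for \eqref{reduced-system} if and only if $(y_e,z_e)$ is $y$-stable. Since the hypothesis is that $y_e$ is \emph{not} Lyapunov stable, $(y_e,z_e)$ fails to be $y$-stable immediately, using neither the invariant set $\mathcal{M}$ nor the invertibility assumptions.

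For the $z$-instability the plan is to exploit the invariant set $\mathcal{M}=\{(y,z)\mid s(y,z)=0\}$ together with the implicit function theorem. Because $n=2m$ and both blocks $\frac{\partial s}{\partial y}(y_e,z_e)$ and $\frac{\partial s}{\partial z}(y_e,z_e)$ are invertible, solving $s(y,z)=0$ near $(y_e,z_e)$ yields two $C^1$ descriptions of $\mathcal{M}$ on suitable neighborhoods: a graph $z=\phi(y)$ with $\phi(y_e)=z_e$ (from inverting the $z$-block) and a graph $y=\psi(z)$ with $\psi(z_e)=y_e$ (from inverting the $y$-block). The invertibility of both blocks is exactly what makes $\phi$ and $\psi$ mutually inverse local diffeomorphisms; the decisive qualitative fact is that, along $\mathcal{M}$ near the equilibrium, the $y$- and $z$-coordinates are locked together diffeomorphically, so smallness of $z-z_e$ is equivalent to smallness of $y-y_e$.

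I would then argue by contraposition: assuming $(y_e,z_e)$ is $z$-stable, I will deduce that $y_e$ is Lyapunov stable for \eqref{reduced-system}, contradicting the hypothesis. Given a target radius $\eta$ for the reduced flow, I choose $\varepsilon>0$ so small that $\psi$ maps $\overline{B(z_e,\varepsilon)}$ into $B(y_e,\eta)$ and so that the compact set $\psi(\overline{B(z_e,\varepsilon)})$ lies inside the open chart $W_y$ on which $y=\psi(z)$ is valid. Then $z$-stability provides a $\delta>0$, and continuity of $\phi$ lets me pick $\rho>0$ so that initial data of the form $(y_0,\phi(y_0))$ with $\|y_0-y_e\|<\rho$ sit on $\mathcal{M}$, in the chart, and within $\delta$ of $(y_e,z_e)$. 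Because $\mathcal{M}$ is invariant and the system is a skew product, the trajectory through $(y_0,\phi(y_0))$ stays on $\mathcal{M}$ and its $y$-component is exactly the reduced flow $y(t,y_0)$, while $z$-stability confines $z(t)$ to $B(z_e,\varepsilon)$. Transferring this confinement through $y(t)=\psi(z(t))$ would give $\|y(t,y_0)-y_e\|<\eta$ for all $t\ge 0$, which is precisely Lyapunov stability of $y_e$.

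The step needing genuine care, and which I expect to be the main obstacle, is justifying $y(t)=\psi(z(t))$ for \emph{all} $t\ge 0$, since the representation $y=\psi(z)$ holds only in the local chart $W_y\times V_z$; a trajectory whose $z$-component stays small could, a priori, let its $y$-component escape $W_y$ along a different branch of $\mathcal{M}$. I would exclude this with a first-exit-time argument: set $t^{*}=\inf\{t\ge 0\mid y(t)\notin W_y\}$, observe $y(0)=y_0\in W_y$, and note that for $t<t^{*}$ the trajectory lies in $(W_y\times V_z)\cap\mathcal{M}$, so $y(t)=\psi(z(t))$. Passing to the limit $t\uparrow t^{*}$ and using $z(t^{*})\in\overline{B(z_e,\varepsilon)}$ together with the compact containment $\psi(\overline{B(z_e,\varepsilon)})\subseteq W_y$ forces $y(t^{*})=\psi(z(t^{*}))$ into the interior of $W_y$, contradicting $y(t^{*})\notin W_y$. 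Hence the $y$-component never leaves the chart, the confinement transfers for all time, and the contradiction is complete.
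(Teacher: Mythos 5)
Your proof is correct, but it is organized differently from the paper's. The $y$-instability step is identical: both you and the paper just quote Theorem~\ref{partial-semi-decoupled-generality}~(ii). For $z$-instability, however, the paper argues \emph{directly} rather than by contraposition: it applies the implicit function theorem once, writing $\mathcal{M}$ locally as a graph $z=r(y)$ over a neighborhood $U$ of $y_e$, notes that $\frac{\partial r}{\partial y}(y_e)=-[\frac{\partial s}{\partial z}(y_e,z_e)]^{-1}\frac{\partial s}{\partial y}(y_e,z_e)$ is invertible, and extracts from the local inversion theorem a quantitative bound $\|r(y)-r(y_e)\|\geq k\|y-y_e\|$ near $y_e$. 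The instability of $y_e$ then yields sequences $y_i\to y_e$ and times $t_i$ with $\|y(t_i,y_i)-y_e\|=\varepsilon/k$; lifting the initial data onto $\mathcal{M}$ by $z_i=r(y_i)$ and using invariance of $\mathcal{M}$, the bound converts the $y$-escape into a $z$-escape of size $\varepsilon$, which is $z$-instability. Your contrapositive version ($z$-stability implies Lyapunov stability of $y_e$) uses both local graphs $z=\phi(y)$ and $y=\psi(z)$, and replaces the quantitative estimate by a compactness-plus-first-exit-time argument. What each buys: the paper's route is shorter and gives an explicit instability margin, but it silently invokes the identity $z(t_i,y_i,z_i)=r(y(t_i,y_i,z_i))$, which presupposes that at time $t_i$ the trajectory lies in the chart $U\times V$ and not on another local branch of $\mathcal{M}$ --- precisely the danger you flag; the paper leaves this point unaddressed (it is repairable: if $z(t_i)\notin V$, then $z(t_i)$ is already a fixed positive distance from $z_e$, which witnesses $z$-instability just as well). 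Your exit-time argument closes that gap rigorously, at the cost of a longer and purely qualitative proof. Both arguments use the two invertibility hypotheses in the same essential way: invertibility of $\frac{\partial s}{\partial z}$ to parametrize $\mathcal{M}$ over the $y$-space so that the unstable reduced initial data can be lifted onto $\mathcal{M}$, and invertibility of $\frac{\partial s}{\partial y}$ so that, along $\mathcal{M}$, confinement of $z$ forces confinement of $y$ (equivalently, escape of $y$ forces escape of $z$).
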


\begin{proof}
The fact that $(y_e,z_e)$ of  is not $y$-stable is a consequence of the Theorem \ref{partial-semi-decoupled-generality} $(ii)$.

By using the implicit function theorem we have an open set $U\subset \mathbb{R}^m$ containing $y_e$, an open set $V\subset \mathbb{R}^m$ containing $z_e$ and $C^1$ function $r:U\rightarrow V$ such that $r(y_e)=z_e$ and $\{(y,r(y))\,|\,y\in U\}=\{(y,z)\in U\times V\,|\,s(y,z)=0\}$. If $y\in U$, then $s(y,r(y))=0$ and we deduce that
$\frac{\partial r}{\partial y}(y_e)=-[\frac{\partial s}{\partial z}(y_e,z_e)]^{-1}\frac{\partial s}{\partial y}(y_e,z_e)$. By hypotheses we have that $\frac{\partial r}{\partial y}(y_e)$ is an invertible matrix. By local inversion theorem we have that exists an open set $W\subset \mathbb{R}^m$ containing $y_e$ such that the function $(y\in W)\rightarrow ||r(y)-r(y_e)||$ is a positive definite function and we deduce the existence of a constant $k>0$ such that $||r(y)-r(y_e)||\geq k||y-y_e||$ for $y\in W$. By hypotheses $y_e$ is not Lyapunov stable for the dynamics generated by \eqref{reduced-system}. There exists $\varepsilon>0$, a sequence $y_i\rightarrow y_e$ and a sequence $t_i>0$ such that $B(y_e,\frac{\varepsilon}{k})\subset U$ and  $||y(t_i,y_i)-y_e||=\frac{\varepsilon}{k}$. We use the notation $y(\cdot,y_0)$ for the solution of \eqref{reduced-system} which start from $y_0$ and an analogous notation for the solutions of \eqref{semi-decoupled}. We denote by $z_i=r(y_i)$ and by continuity we $z_i\rightarrow z_e$. Because $(y_i,z_i)\in \mathcal{M}$ and $\mathcal{M}$ is invariant under the dynamics generated by \eqref{semi-decoupled} we deduce that $(y(t,y_i,z_i),z(t,y_i,z_i))\in \mathcal{M}$ for all $t$. With the above notations we have
$$||z(t_i,y_i,z_i)-z_e||=||r(y(t_i,y_i,z_i))-r(y_e)||>k||y(t_i,y_i)-y_e||=\varepsilon,$$
which implies the $z$-instability of the equilibrium point $(y_e,z_e)$.
\end{proof}

\section{The stability of the uniform rotations of a torque-free gyrostat}\label{torque-free gyrostat}

The problem of the rotations of a torque-free gyrostat is a classical problem in the theory of the motion of a rigid body (see \cite{wittenburg}). The study of stability of uniform rotations appears in many papers. In \cite{puta-comanescu} is studied an equivalent differential system and the results of this paper are used in \cite{comanescu} to study the stability with respect to a set of conserved quantities and Lyapunov stability of uniform rotations.
The torque-free gyrostat equation is
\begin{equation}\label{torque-free}
    \dot{\vec{M}}=(\vec{M}+\vec{\mu})\times \mathbb{I}^{-1}\vec{M}
\end{equation}
This equation has two conserved quantities
$H=\frac{1}{2}\vec{M}\cdot \mathbb{I}^{-1}\vec{M}\,\,\text{and}\,\,F=\frac{1}{2}(\vec{M}+\vec{\mu})\cdot (\vec{M}+\vec{\mu}).$
We observe that if we replace the vector of gyrostatic moment $\vec{\mu}$ with $-\vec{\mu}$ then we obtain the equation
\begin{equation}\label{minus-torque-free}
    \dot{\vec{M}}=(\vec{M}-\vec{\mu})\times \mathbb{I}^{-1}\vec{M}.
\end{equation}
A function $\vec{M}:\mathbb{R}\rightarrow \mathbb{R}^3$ is a solution of \eqref{torque-free} if and only if the function $-\vec{M}$ is a solution of \eqref{minus-torque-free}. We have the same stability properties for an uniform rotations $\vec{M}_e$ of \eqref{torque-free} and for an uniform rotations $-\vec{M}_e$ of \eqref{minus-torque-free}.

We denote by $I_1,I_2$ and $I_3$ the principal moment of inertia and suppose that $I_1>I_2>I_3$. For the following considerations we use a body frame for which the axes are principal axes of inertia. The matrix of the moment of inertia tensor in this body frame has the form
$\mathbb{I}=\hbox{diag}(I_1,I_2,I_3)$. Also, we denote by $\mu_1,\mu_2$ and $\mu_3$ the components of the vector $\vec{\mu}$ with respect to this frame.
We remind, see \cite{comanescu}, some results of stability for the uniform rotations in the cases in which the gyrostatic moment is directed in the positive sense along a principal axis of inertia. Using the above observations we have analogous results for the cases in which the gyrostatic moment is directed in the negative sense along a principal axis of inertia.

{\bf I.} An uniform rotation is Lyapunov stable if and only if it is stable with respect to the set of conserved quantities $\{H,F\}$.

{\bf II.} For the case $\mu_1>0,\,\mu_2=0,\,\mu_3=0$ we have:
\begin{itemize}
\item [(1)] An uniform rotation of the form $\vec{M}_{1-2}=(q,0,0)$, with $q\in \mathbb{R}$, has the properties:
\begin{itemize}
\item [(1.1)] It is spectrally stable if and only if $q\in (-\infty, -\frac{I_1\mu_1}{I_1-I_2}]\cup [-\frac{I_1\mu_1}{I_1-I_3},\infty)$.
\item [(1.2)] It is stable with respect to the set of conserved quantities $\{H,F\}$ if and only if $q\in (-\infty, -\frac{I_1\mu_1}{I_1-I_2})\cup [-\frac{I_1\mu_1}{I_1-I_3},\infty)$.
\end{itemize}
\item [(2)] An uniform rotation of the form $\vec{M}_{4}=(\frac{I_1}{I_2-I_1}\mu_1, q,0)$, $q\neq 0$, is spectrally unstable.
\item [(3)] An uniform rotation of the form $\vec{M}_{5}=(\frac{I_1}{I_3-I_1}\mu_1, 0, q)$, $q\neq 0$, is stable with respect to the set of conserved quantities $\{H,F\}$.
\end{itemize}

{\bf III.} For the case $\mu_1=0,\mu_2>0,\,\mu_3=0$ we have:
\begin{itemize}
\item [(1)] An uniform rotation of the form $\vec{M}_{1-2}=(0,q,0)$ is stable with respect to the set of conserved quantities $\{H,F\}$ if and only if $q\in [-\frac{I_2\mu_2}{I_2-I_3},\frac{I_2\mu_2}{I_1-I_2}]$.
\item [(2)] An uniform rotation of the form $\vec{M}_{3}=(q, \frac{I_2}{I_1-I_2}\mu_2, 0)$ or $\vec{M}_{5}=(0, \frac{I_2}{I_3-I_2}\mu_2, q)$, $q\neq 0$, is stable with respect to the set of conserved quantities $\{H,F\}$.
\end{itemize}

{\bf IV.} For the case $\mu_1=0,\,\mu_2=0,\,\mu_3>0$ we have:
\begin{itemize}
\item [(1)] An uniform rotation of the form $\vec{M}_{1-2}=(0,0,q)$, with $q\in \mathbb{R}$, has the properties:
\begin{itemize}
\item [(1.1)] It is spectrally stable if and only if $q\in (-\infty, \frac{I_3\mu_3}{I_1-I_3}]\cup [\frac{I_3\mu_3}{I_2-I_3},\infty)$.
\item [(1.2)] It is stable with respect to the set of conserved quantities $\{H,F\}$ if and only if $q\in (-\infty, \frac{I_3\mu_3}{I_1-I_3}]\cup (\frac{I_3\mu_3}{I_2-I_3},\infty)$.
\end{itemize}
\item [(2)] An uniform rotation of type $\vec{M}_3=(q,0,\frac{I_3}{I_1-I_3}\mu_3)$, $q\neq 0$, is stable with respect to the set of conserved quantities $\{H,F\}$.
\item [(3)] An uniform rotation of type $\vec{M}_4=(0,q,\frac{I_3}{I_2-I_3}\mu_3)$, $q\neq 0$, is spectrally unstable.
\end{itemize}

\section{The stability of the equilibrium states in the Zhukovski case of integrability of a heavy gyrostat}

In this section we apply the results from Section 2 for the Zhukovski case of a heavy gyrostat. We consider the cases in which the vector of gyrostatic moment is situated along a principal axis of inertia of the gyrostat.
We denote, also, by $I_1,I_2$ and $I_3$ the principal moment of inertia and suppose that $I_1>I_2>I_3$. The matrix of the moment of inertia tensor in our body frame has the form
$\mathbb{I}=\hbox{diag}(I_1,I_2,I_3)$ and $\mu_1,\mu_2,\mu_3$ are the components of the vector $\vec{\mu}$.

\subsection{The equilibrium states}

The equations of the equilibrium states are
$$(\vec{M}+\vec{\mu})\times \mathbb{I}^{-1}\vec{M}=\vec{0},\,\,\,\vec{\gamma}\times \mathbb{I}^{-1}\vec{M}=\vec{0}.$$
It is easy to see that we have the following result.
\begin{lem}\label{echilibre-M}  The equilibrium states of \eqref{M} are characterized by the following properties.
\begin{itemize}
\item [(i)] If $(\vec{M}_e,\vec{\gamma}_e)$ is an equilibrium state for the system \eqref{M} then $\vec{M}_e$ is an equilibrium state for \eqref{torque-free}.
\item [(ii)] For all $\vec{\gamma}_e\in \mathbb{R}^3$, the states $(\vec{0},\vec{\gamma}_e)$ are equilibrium states for \eqref{M}.
\item [(iii)] If $\vec{M}_e\neq \vec{0}$ is an equilibrium state for \eqref{torque-free} then $(\vec{M}_e,\vec{\gamma}_e)$ is an equilibrium state for the system \eqref{M} if and only if there exists $\theta \in \mathbb{R}$ such that $\vec{\gamma}_e=\theta\mathbb{I}^{-1}\vec{M}_e$.
\end{itemize}
\end{lem}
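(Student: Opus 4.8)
The plan is to read off each statement directly from the two equilibrium equations
$$(\vec{M}+\vec{\mu})\times \mathbb{I}^{-1}\vec{M}=\vec{0},\qquad \vec{\gamma}\times \mathbb{I}^{-1}\vec{M}=\vec{0},$$
handling them one item at a time. For part $(i)$ I would simply observe that an equilibrium $(\vec{M}_e,\vec{\gamma}_e)$ of \eqref{M} satisfies in particular the first equation, which is exactly the equilibrium condition $(\vec{M}_e+\vec{\mu})\times \mathbb{I}^{-1}\vec{M}_e=\vec{0}$ for the torque-free system \eqref{torque-free}; hence $\vec{M}_e$ is an equilibrium of \eqref{torque-free} with no further computation.

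For part $(ii)$ I would substitute $\vec{M}_e=\vec{0}$. Since $\mathbb{I}^{-1}\vec{0}=\vec{0}$, both cross products vanish identically for every choice of $\vec{\gamma}_e$, so each pair $(\vec{0},\vec{\gamma}_e)$ solves the equilibrium equations and is therefore an equilibrium state.

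The only part requiring an actual argument is $(iii)$. Here $\vec{M}_e\neq\vec{0}$ satisfies the first equilibrium equation by hypothesis, so $(\vec{M}_e,\vec{\gamma}_e)$ is an equilibrium of \eqref{M} if and only if the second equation $\vec{\gamma}_e\times\mathbb{I}^{-1}\vec{M}_e=\vec{0}$ holds. The key step is to note that, since $\mathbb{I}$ is invertible and $\vec{M}_e\neq\vec{0}$, the vector $\mathbb{I}^{-1}\vec{M}_e$ is nonzero; I would then invoke the elementary fact that for a nonzero vector $\vec{b}$ one has $\vec{a}\times\vec{b}=\vec{0}$ if and only if $\vec{a}$ is a scalar multiple of $\vec{b}$. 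Applying this with $\vec{a}=\vec{\gamma}_e$ and $\vec{b}=\mathbb{I}^{-1}\vec{M}_e$ produces $\theta\in\mathbb{R}$ with $\vec{\gamma}_e=\theta\mathbb{I}^{-1}\vec{M}_e$, which is precisely the stated characterization, the value $\theta=0$ recovering the case $\vec{\gamma}_e=\vec{0}$.

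I do not expect any genuine obstacle: the whole lemma is a direct unpacking of the two defining equations, and the single substantive ingredient is the collinearity criterion for the vanishing of a cross product together with the invertibility of $\mathbb{I}$, which guarantees $\mathbb{I}^{-1}\vec{M}_e\neq\vec{0}$ whenever $\vec{M}_e\neq\vec{0}$.
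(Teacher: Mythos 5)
Your proposal is correct and is exactly the argument the paper intends: the paper states the two equilibrium equations and dismisses the lemma with ``it is easy to see,'' so your direct unpacking --- item $(i)$ from the first equation, item $(ii)$ by substituting $\vec{M}_e=\vec{0}$, and item $(iii)$ from the collinearity criterion for a vanishing cross product combined with the invertibility of $\mathbb{I}$ (so that $\mathbb{I}^{-1}\vec{M}_e\neq\vec{0}$) --- is precisely the omitted verification, with no gaps.
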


According to the results of Section \ref{torque-free gyrostat} and using Lemma \eqref{echilibre-M} we obtain:
\begin{itemize}
\item [(I)] If $\mu_2=\mu_3=0$ we have the equilibrium states:
$$(\vec{M},\vec{\gamma})_0=(0,0,0,\alpha_1,\alpha_2,\alpha_3),\,\,(\vec{M},\vec{\gamma})_{1-2}=(q,0,0,\alpha,0,0),$$
$$(\vec{M},\vec{\gamma})_4=(\frac{I_1}{I_2-I_1}\mu_1, \beta,0,\frac{\theta\mu_1}{I_2-I_1},\frac{\theta\beta}{I_2},0),\,\,(\vec{M},\vec{\gamma})_5=(\frac{I_1}{I_3-I_1}\mu_1, 0, \beta,\frac{\theta\mu_1}{I_3-I_1},0,
\frac{\theta\beta}{I_3}).$$
\item [(II)] If $\mu_1=\mu_3=0$ we have the equilibrium states:
$$(\vec{M},\vec{\gamma})_0=(0,0,0,\alpha_1,\alpha_2,\alpha_3),\,\,(\vec{M},\vec{\gamma})_{1-2}=(0,q,0,0,\alpha,0),$$
$$(\vec{M},\vec{\gamma})_3=(\beta, \frac{I_2}{I_1-I_2}\mu_2, 0,\frac{\theta\beta}{I_1},\frac{\theta\mu_2}{I_1-I_2},
0),\,\,\,(\vec{M},\vec{\gamma})_5=(0, \frac{I_2}{I_3-I_2}\mu_2, \beta,0,\frac{\theta\mu_2}{I_3-I_2},
\frac{\theta\beta}{I_3}).$$
\item [(III)] If $\mu_1=\mu_2=0$ we have the equilibrium states:
$$(\vec{M},\vec{\gamma})_0=(0,0,0,\alpha_1,\alpha_2,\alpha_3),\,\,(\vec{M},\vec{\gamma})_{1-2}=(0,0,q,0,0,\alpha),$$
$$(\vec{M},\vec{\gamma})_3=(\beta, 0, \frac{I_3}{I_1-I_3}\mu_3,\frac{\theta\beta}{I_1},0,
\frac{\theta\mu_3}{I_1-I_3}),\,\,\,(\vec{M},\vec{\gamma})_4=(0, \beta, \frac{I_3}{I_2-I_3}\mu_3,0,\frac{\theta\beta}{I_2},
\frac{\theta\mu_3}{I_2-I_3}).$$
\end{itemize}
In the above expressions we have $\alpha,\alpha_1,\alpha_2,\alpha_3,\theta\in \mathbb{R}$ and $q,\beta\in\mathbb{R}^*$.

\subsection{Spectral stability of the equilibrium states}
We prove that spectral stability of the equilibrium states is characterized by spectral stability of the uniform rotations of a torque-free gyrostat. Spectral stability of an uniform rotation of the torque-free gyrostat is studied in paper \cite{comanescu} using some computations which appear in paper \cite{puta-comanescu}. These results are presented in Section \ref{torque-free gyrostat}.

\begin{thm}\label{spectral-stability}
An equilibrium state $(\vec{M}_e,\vec{\gamma}_e)$ of the system \eqref{M} is spectrally stable if and only if the uniform rotation $\vec{M}_e$ of the torque-free gyrostat equation \eqref{torque-free} is spectrally stable.
\end{thm}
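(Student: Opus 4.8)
The plan is to read \eqref{M} as a skew product system \eqref{semi-decoupled} with $y=\vec{M}$ and $z=\vec{\gamma}$, and to feed this into the spectrum-splitting Lemma of Section~2. With this identification the reduced equation $\dot{\vec{M}}=(\vec{M}+\vec{\mu})\times\mathbb{I}^{-1}\vec{M}$ is literally the torque-free gyrostat equation \eqref{torque-free}, so $g$ coincides with its right-hand side and $\frac{\partial g}{\partial \vec{M}}(\vec{M}_e)$ is exactly the linearization of \eqref{torque-free} at the uniform rotation $\vec{M}_e$. First I would invoke that Lemma to obtain
$$Sp\Big(\frac{\partial(g,h)}{\partial(\vec{M},\vec{\gamma})}(\vec{M}_e,\vec{\gamma}_e)\Big)=Sp\Big(\frac{\partial g}{\partial \vec{M}}(\vec{M}_e)\Big)\cup Sp\Big(\frac{\partial h}{\partial \vec{\gamma}}(\vec{M}_e,\vec{\gamma}_e)\Big),$$
which reduces the problem to understanding the two diagonal blocks separately.

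The decisive step is to control the $\vec{\gamma}$-block. Writing $h(\vec{M},\vec{\gamma})=\vec{\gamma}\times\mathbb{I}^{-1}\vec{M}=-\widehat{\mathbb{I}^{-1}\vec{M}}\,\vec{\gamma}$ and differentiating in $\vec{\gamma}$ with $\vec{M}$ frozen at $\vec{M}_e$ gives $\frac{\partial h}{\partial \vec{\gamma}}(\vec{M}_e,\vec{\gamma}_e)=-\widehat{\mathbb{I}^{-1}\vec{M}_e}$, a real skew-symmetric $3\times 3$ matrix. Such a matrix has spectrum $\{0,\pm i|\mathbb{I}^{-1}\vec{M}_e|\}$, so all of its eigenvalues are purely imaginary and in particular have zero real part. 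Hence the $\vec{\gamma}$-block can never produce an eigenvalue with positive real part, independently of the equilibrium.

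Combining the two remarks, the full linearization at $(\vec{M}_e,\vec{\gamma}_e)$ has all eigenvalues with nonpositive real part if and only if $\frac{\partial g}{\partial \vec{M}}(\vec{M}_e)$ does; by the definition of spectral stability recalled from \cite{marsden-ratiu} this is precisely the spectral stability of $\vec{M}_e$ as an equilibrium of \eqref{torque-free}, which yields the claimed equivalence. I do not expect a serious obstacle, since every computation is elementary; the only point needing care is rewriting $h$ as a linear map of $\vec{\gamma}$ via $\vec{\gamma}\times\vec{w}=-\widehat{\vec{w}}\,\vec{\gamma}$, so that $\frac{\partial h}{\partial \vec{\gamma}}$ is correctly identified as $-\widehat{\mathbb{I}^{-1}\vec{M}_e}$ (and not, say, as $\widehat{\vec{\gamma}_e}$), after which the skew-symmetry and the purely imaginary spectrum are immediate.
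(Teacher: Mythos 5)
Your proposal is correct and is essentially the paper's own argument: the paper computes the same block lower-triangular linearization and factors the characteristic polynomial as $-t(t^2+||\mathbb{I}^{-1}\vec{M}_e||^2)\mathcal{P}_{\vec{M}_e}(t)$, which encodes exactly your observation that the $\vec{\gamma}$-block is the hat map of $\mathbb{I}^{-1}\vec{M}_e$ with purely imaginary spectrum $\{0,\pm i||\mathbb{I}^{-1}\vec{M}_e||\}$, so spectral stability reduces to that of the $\vec{M}$-block. (Your sign $-\widehat{\mathbb{I}^{-1}\vec{M}_e}$ for that block is in fact the careful one given the paper's hat-map convention, but it is immaterial since skew-symmetry and the spectrum are unchanged.)
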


\begin{proof}
The linearization of the right side of the equation \eqref{M} is
$$\mathcal{L}(\vec{M},\vec{\gamma})=\left(
                                                   \begin{array}{cc}
                                                     \mathcal{L}_M(\vec{M}) & \mathcal{O}_3 \\
                                                     \widehat{\vec{\gamma}}\cdot\mathbb{I}^{-1} & \widehat{\mathbb{I}^{-1}\vec{M}} \\
                                                   \end{array}
                                                 \right),$$
where $\mathcal{L}_M(\vec{M})$ is the linearization of the right side of \eqref{torque-free}. The characteristic polynomial of $\mathcal{L}(\vec{M}_e,\vec{\gamma}_e)$ is
$$\mathcal{P}_{(\vec{M}_e,\vec{\gamma}_e)}(t)=-t(t^2+||\mathbb{I}^{-1}\vec{M}_e||^2)\mathcal{P}_{\vec{M}_e}(t),$$
where $\mathcal{P}_{\vec{M}_e}(t)$ is the characteristic polynomial of $\mathcal{L}_M(\vec{M}_e)$.
\end{proof}

\subsection{Stability with respect to the set of conserved quantities}

In this section we study the stability of an equilibrium state of the system \eqref{M} with respect to the set $\mathcal{CQ}=\{H,C_1,C_2,F\}$ of conserved quantities. To apply Theorem \ref{semi-decoupled-stability-theory} it is necessary to study the algebraic system
\begin{equation}\label{reduced}
    \left\{
      \begin{array}{ll}
        C_1(\vec{\gamma})= C_1(\vec{\gamma}_0)\\
        C_2(\vec{\gamma})= C_2(\vec{\gamma}_0)
      \end{array}
    \right.
\Leftrightarrow
\left\{
      \begin{array}{ll}
        \vec{\gamma}\cdot\vec{\gamma}= \vec{\gamma}_0\cdot\vec{\gamma}_0\\
        (\vec{M}_0+\vec{\mu})\cdot\vec{\gamma}= (\vec{M}_0+\vec{\mu})\cdot\vec{\gamma}_0
      \end{array}
    \right.,
\end{equation}
where $\vec{M}_0,\vec{\gamma}_0\in \mathbb{R}^3$ and $\vec{\gamma}$ is the unknown.

\begin{lem}\label{solve-system-reduced}
The solutions of the system \eqref{reduced} have the properties:
\begin{itemize}
\item [(i)] if $\vec{\gamma}_0=\vec{0}$, then $\vec{\gamma}_0$ is the unique solution of the system \eqref{reduced};
\item [(ii)] if $\vec{M}_0=-\vec{\mu}$ and $\vec{\gamma}_0\neq\vec{0}$, then the set of solutions of \eqref{reduced} is $\{\vec{\gamma}\,|\,\vec{\gamma}\cdot\vec{\gamma}= \vec{\gamma}_0\cdot\vec{\gamma}_0\}$;
\item [(iii)] if $\vec{M}_0\neq -\vec{\mu}$ and $\vec{\gamma}_0\neq\vec{0}$, then the solution $\vec{\gamma}_0$ is isolated in the set of all the solutions of the system \eqref{reduced} if and only if the vectors $\vec{M}_0+\vec{\mu}$ and $\vec{\gamma}_0$ are linearly dependent.
\end{itemize}
\end{lem}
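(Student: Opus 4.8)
The plan is to read the two equations of \eqref{reduced} geometrically: the first defines the sphere $S$ centered at the origin with radius $\|\vec{\gamma}_0\|$, while the second defines an affine plane $P$ in $\mathbb{R}^3$. Parts (i) and (ii) then fall out by inspection. For (i) I would set $\vec{\gamma}_0=\vec{0}$, so the first equation becomes $\vec{\gamma}\cdot\vec{\gamma}=0$ and forces $\vec{\gamma}=\vec{0}$, with the second equation satisfied automatically. For (ii) the hypothesis $\vec{M}_0=-\vec{\mu}$ gives $\vec{M}_0+\vec{\mu}=\vec{0}$, so the second equation degenerates to $0=0$ and only the sphere constraint remains, yielding exactly the stated set.

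The real content is in (iii), where I would write $\vec{a}:=\vec{M}_0+\vec{\mu}\neq\vec{0}$ and $R:=\|\vec{\gamma}_0\|>0$. The solution set is $S\cap P$, which is nonempty because it contains $\vec{\gamma}_0$. A sphere and a plane that meet do so either in a single point (tangency) or in a circle; on a circle no point is isolated, whereas a one-point intersection is trivially isolated. Hence $\vec{\gamma}_0$ is isolated in the solution set if and only if $P$ is tangent to $S$, i.e.\ the distance from the origin to $P$ equals $R$. Since that distance is $|\vec{a}\cdot\vec{\gamma}_0|/\|\vec{a}\|$, tangency is the equation $|\vec{a}\cdot\vec{\gamma}_0|=\|\vec{a}\|\,\|\vec{\gamma}_0\|$.

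To finish I would invoke the equality case of the Cauchy--Schwarz inequality: $|\vec{a}\cdot\vec{\gamma}_0|=\|\vec{a}\|\,\|\vec{\gamma}_0\|$ holds precisely when $\vec{a}$ and $\vec{\gamma}_0$ are linearly dependent, which is the asserted condition on $\vec{M}_0+\vec{\mu}$ and $\vec{\gamma}_0$. If one prefers a purely algebraic argument, the substitution $\vec{\gamma}=\vec{\gamma}_0+\vec{u}$ turns \eqref{reduced} into $\vec{a}\cdot\vec{u}=0$ together with $2\,\vec{\gamma}_0\cdot\vec{u}+\|\vec{u}\|^2=0$; in the dependent case $\vec{u}\perp\vec{a}$ forces $\vec{u}\perp\vec{\gamma}_0$ and hence $\vec{u}=\vec{0}$, while in the independent case the nonzero projection of $\vec{\gamma}_0$ onto $\vec{a}^{\perp}$ produces a genuine circle of solutions through $\vec{\gamma}_0$. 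The hard part is not any computation but the bookkeeping step justifying why a nonempty sphere--plane intersection can only be a point or a circle, so that ``isolated'' is correctly identified with ``tangent''; granting that, the whole statement collapses to the equality case of Cauchy--Schwarz.
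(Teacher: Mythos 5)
Your proof is correct and follows essentially the same route as the paper: parts (i) and (ii) by inspection, and part (iii) by viewing the solution set as the intersection of the sphere $\vec{\gamma}\cdot\vec{\gamma}=\vec{\gamma}_0\cdot\vec{\gamma}_0$ with the plane $(\vec{M}_0+\vec{\mu})\cdot\vec{\gamma}=(\vec{M}_0+\vec{\mu})\cdot\vec{\gamma}_0$, identifying isolation of $\vec{\gamma}_0$ with tangency. The only difference is one of detail: the paper merely \emph{observes} that tangency is equivalent to linear dependence of $\vec{M}_0+\vec{\mu}$ and $\vec{\gamma}_0$, whereas you justify this step explicitly via the equality case of the Cauchy--Schwarz inequality (and also sketch an algebraic variant), which fills in a gap the paper leaves to the reader.
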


\begin{proof}
It is easy to see that the affirmations $(i)$ and $(ii)$ are true.

In the hypotheses of $(iii)$ the solutions are the intersection of a sphere with a plane. The sphere and the plane have a common point $\vec{\gamma}_0$. In the case in which the plane is tangent to the sphere the intersection is formed by the unique point $\vec{\gamma}_0$. In the case in which the plane is not tangent to the sphere, then the intersection is a circle and $\vec{\gamma}_0$ is not an isolated point in the set of all the solutions. We observe that the plane is tangent to the sphere if and only if the vectors $\vec{M}_0+\vec{\mu}$ and $\vec{\gamma}_0$ are linearly dependent and we deduce the enounced result.
\end{proof}

\begin{thm}\label{stability-conserved-quantities-first-axis}
 In the case of gyrostatic moment along the first axis of inertia, $\mu_1>0,\,\mu_2=0,\,\mu_3=0$, we have the following results about the equilibrium states.
\begin{itemize}
\item [a)] $(\vec{M},\vec{\gamma})_0=(0,0,0,\alpha_1,\alpha_2,\alpha_3)$ is stable with respect to the set of conserved quantities $\mathcal{CQ}$ if and only if $\alpha_2=\alpha_3=0$.
\item [b)] $(\vec{M},\vec{\gamma})_{1-2}=(q,0,0,\alpha,0,0)$ is stable with respect to the set of conserved quantities $\mathcal{CQ}$ if and only if  $q\in (-\infty, -\frac{I_1\mu_1}{I_1-I_2})\cup [-\frac{I_1\mu_1}{I_1-I_3},\infty)$ and $q\neq -\mu_1$ or $q=-\mu_1$ and $\alpha=0$.
\item [c)] $(\vec{M},\vec{\gamma})_4=(\frac{I_1}{I_2-I_1}\mu_1, \beta,0,\frac{\theta\mu_1}{I_2-I_1},\frac{\theta\beta}{I_2},0)$ is not stable with respect to the set of conserved quantities $\mathcal{CQ}$.
\item [d)] $(\vec{M},\vec{\gamma})_5=(\frac{I_1}{I_3-I_1}\mu_1, 0, \beta,\frac{\theta\mu_1}{I_3-I_1},0,
\frac{\theta\beta}{I_3})$ is stable with respect to the set of conserved quantities $\mathcal{CQ}$.
\end{itemize}
\end{thm}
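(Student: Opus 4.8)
The plan is to read the system \eqref{M} as a skew product of the form \eqref{semi-decoupled} with $y=\vec{M}$ and $z=\vec{\gamma}$: the reduced system \eqref{reduced-system} is exactly the torque-free gyrostat equation \eqref{torque-free}, the reduced conserved quantities are $\{H,F\}$ (they depend only on $\vec{M}$), and the two remaining conserved quantities are $\{C_1,C_2\}$. For each of the four families I would first read off from Section \ref{torque-free gyrostat}, case II, whether the projected equilibrium $\vec{M}_e$ is stable with respect to $\{H,F\}$ for \eqref{torque-free}; by property I this is the same as Lyapunov stability of $\vec{M}_e$. If $\vec{M}_e$ is Lyapunov unstable I would invoke Theorem \ref{semi-decoupled-stability-theory} b) to conclude that $(\vec{M}_e,\vec{\gamma}_e)$ is not stable with respect to $\mathcal{CQ}$; if $\vec{M}_e$ is $\{H,F\}$-stable I would invoke Theorem \ref{semi-decoupled-stability-theory} a) and reduce the question to whether $\vec{\gamma}_e$ is isolated among the solutions of the algebraic system \eqref{reduced} with $\vec{M}_0=\vec{M}_e$, $\vec{\gamma}_0=\vec{\gamma}_e$, which is decided by Lemma \ref{solve-system-reduced}.

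Parts c) and d) fall out immediately. For c), the rotation $\vec{M}_e=(\frac{I_1}{I_2-I_1}\mu_1,\beta,0)$ with $\beta\neq0$ is spectrally unstable by case II.(2), hence Lyapunov unstable, so Theorem \ref{semi-decoupled-stability-theory} b) gives instability with respect to $\mathcal{CQ}$. For d), $\vec{M}_e=(\frac{I_1}{I_3-I_1}\mu_1,0,\beta)$ is $\{H,F\}$-stable by case II.(3), so I pass to the isolatedness test. The key observation is that for every nonzero equilibrium furnished by Lemma \ref{echilibre-M} (iii) one has $\vec{\gamma}_e=\theta\mathbb{I}^{-1}\vec{M}_e$, while the equilibrium relation $(\vec{M}_e+\vec{\mu})\times\mathbb{I}^{-1}\vec{M}_e=\vec{0}$ forces $\vec{M}_e+\vec{\mu}$ to be parallel to $\mathbb{I}^{-1}\vec{M}_e$; hence $\vec{\gamma}_e$ is parallel to $\vec{M}_e+\vec{\mu}$. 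Lemma \ref{solve-system-reduced} (iii) then yields that $\vec{\gamma}_e$ is isolated (the degenerate case $\theta=0$, i.e. $\vec{\gamma}_e=\vec{0}$, is covered by Lemma \ref{solve-system-reduced} (i)), so the state is stable with respect to $\mathcal{CQ}$.

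For a) and b) the same machinery applies but the bookkeeping is finer. In a) the projection is $\vec{M}_e=\vec{0}$, the member $q=0$ of the family $\vec{M}_{1-2}$, which lies in the $\{H,F\}$-stability interval of case II.(1.2); since $\vec{M}_e+\vec{\mu}=(\mu_1,0,0)\neq-\vec{\mu}$, Lemma \ref{solve-system-reduced} tells me $\vec{\gamma}_e$ is isolated exactly when it is parallel to $(1,0,0)$, that is $\alpha_2=\alpha_3=0$ (the case $\vec{\gamma}_e=\vec{0}$ being subsumed). In b) I would split on whether $q$ lies in the interval $(-\infty,-\frac{I_1\mu_1}{I_1-I_2})\cup[-\frac{I_1\mu_1}{I_1-I_3},\infty)$: if not, then $\vec{M}_e=(q,0,0)$ is Lyapunov unstable and Theorem \ref{semi-decoupled-stability-theory} b) gives instability; if so, I run the isolatedness test. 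Here $\vec{M}_e+\vec{\mu}=(q+\mu_1,0,0)$ and $\vec{\gamma}_e=(\alpha,0,0)$ are automatically parallel, so by Lemma \ref{solve-system-reduced} (iii) the state is stable unless the exceptional case $\vec{M}_e=-\vec{\mu}$ of Lemma \ref{solve-system-reduced} (ii) occurs, namely $q=-\mu_1$ with $\alpha\neq0$, where the solution set is a whole sphere and $\vec{\gamma}_e$ is not isolated. This reproduces the stated condition.

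The main obstacle is the endpoint analysis in b). I must check that the exceptional value $q=-\mu_1$ actually sits inside the $\{H,F\}$-stability interval, for otherwise the clause ``$q=-\mu_1$ and $\alpha=0$'' would be vacuous and the case already handled by Theorem \ref{semi-decoupled-stability-theory} b). Using $I_1>I_2>I_3>0$ and $\mu_1>0$ one verifies $-\mu_1>-\frac{I_1\mu_1}{I_1-I_2}$ and $-\mu_1>-\frac{I_1\mu_1}{I_1-I_3}$, so $-\mu_1$ lies in $[-\frac{I_1\mu_1}{I_1-I_3},\infty)$, as needed. I must likewise track the half-open left endpoint $q=-\frac{I_1\mu_1}{I_1-I_2}$, which is spectrally stable by case II.(1.1) but not $\{H,F\}$-stable by case II.(1.2), so that it is correctly thrown into the unstable alternative through Theorem \ref{semi-decoupled-stability-theory} b).
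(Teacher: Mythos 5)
Your proposal is correct and follows essentially the same route as the paper: identify $\vec{M}$ as the reduced variable so that \eqref{torque-free} is the reduced system with conserved quantities $\{H,F\}$, use the Section \ref{torque-free gyrostat} results together with property I to split each family into the Lyapunov-unstable case (handled by Theorem \ref{semi-decoupled-stability-theory} b)) and the $\{H,F\}$-stable case (handled by Theorem \ref{semi-decoupled-stability-theory} a) plus the isolatedness criterion of Lemma \ref{solve-system-reduced}). Your extra endpoint checks (that $-\mu_1$ lies in $[-\frac{I_1\mu_1}{I_1-I_3},\infty)$ and that $q=-\frac{I_1\mu_1}{I_1-I_2}$ falls into the unstable alternative) are points the paper passes over silently, and they are verified correctly.
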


\begin{proof}
$a)$ Using the results of Section \ref{torque-free gyrostat} we have that $(0,0,0)$ is an uniform rotation of the torque-free gyrostat system \eqref{torque-free} which is stable with respect to the set of conserved quantities $\{H,F\}$. We can apply Theorem \ref{semi-decoupled-stability-theory} (a.) and consequently we have that the equilibrium state $(\vec{M},\vec{\gamma})_0$ is stable with respect to the set of conserved quantities $\mathcal{CQ}$ if and only if the vector $(\alpha_1,\alpha_2,\alpha_3)$ is an isolated solution of the algebraic system \eqref{reduced}, with $\vec{M}_0=(0,0,0)$ and $\vec{\gamma}_0=(\alpha_1,\alpha_2,\alpha_3)$. By using Lemma \ref{solve-system-reduced} $(i)$ and $(iii)$ we have that $(\vec{M},\vec{\gamma})_0$ is stable with respect to the set of conserved quantities $\mathcal{CQ}$ if and only if the vectors $(\mu_1,0,0)$ and $(\alpha_1,\alpha_2,\alpha_3)$ are linearly dependent and consequently, $\alpha_2=\alpha_3=0$.

b) If $q\notin (-\infty, -\frac{I_1\mu_1}{I_1-I_2})\cup [-\frac{I_1\mu_1}{I_1-I_3},\infty)$, then from Section \ref{torque-free gyrostat} we have that the uniform rotation $(q,0,0)$ of the torque-free gyrostat system is Lyapunov unstable. We apply Theorem \ref{semi-decoupled-stability-theory} (b.) and consequently, the equilibrium state $(\vec{M},\vec{\gamma})_{1-2}$ is not stable with respect to the set of conserved quantities $\mathcal{CQ}$.

If $q=-\mu_1$, then the uniform rotation $(q,0,0)$ of the torque-free gyrostat system is stable with respect to the set of conserved quantities $\{H,F\}$. If $\alpha=0$, then by Lemma \ref{solve-system-reduced} $(i)$ the algebraic system \eqref{reduced}, with $\vec{M}_0=(-\mu_1,0,0)$ and $\vec{\gamma}_0=(0,0,0)$, has a unique solution and we deduce that the equilibrium state $(-\mu_1,0,0,0,0,0)$ is stable with respect to the set of conserved quantities $\mathcal{CQ}$. If $\alpha\neq 0$, then by Lemma \ref{solve-system-reduced} $(ii)$ the solution $(\alpha,0,0)$ of the algebraic system \eqref{reduced}, with $\vec{M}_0=(-\mu_1,0,0)$ and $\vec{\gamma}_0=(\alpha,0,0)$, is not isolated in the set of all the solutions and we deduce that the equilibrium state $(-\mu_1,0,0,\alpha,0,0)$ is not stable with respect to the set of conserved quantities $\mathcal{CQ}$.

If $q\in (-\infty, -\frac{I_1\mu_1}{I_1-I_2})\cup [-\frac{I_1\mu_1}{I_1-I_3},\infty)$ and $q\neq -\mu_1$, then from Section \ref{torque-free gyrostat} we have that the uniform rotation $(q,0,0)$ of the torque-free gyrostat system is stable with respect to the set of conserved quantities $\{H,F\}$. We observe that the vectors $\vec{M}_{1-2}+\vec{\mu}=(q+\mu_1,0,0)$ and $\vec{\gamma}_{1-2}=(\alpha,0,0)$ are linearly dependent. By using Lemma \ref{solve-system-reduced} and Theorem \ref{semi-decoupled-stability-theory} (a.) we deduce that the equilibrium state $(\vec{M},\vec{\gamma})_{1-2}$ is stable with respect to the set of conserved quantities $\mathcal{CQ}$.

c) From Section \ref{torque-free gyrostat} we know that the uniform rotation $(\frac{I_1}{I_2-I_1}\mu_1, \beta,0)$ of the torque-free gyrostat system is Lyapunov unstable. We apply Theorem \ref{semi-decoupled-stability-theory} (b.) and consequently, the equilibrium state $(\vec{M},\vec{\gamma})_{4}$ is not stable with respect to the set of conserved quantities $\mathcal{CQ}$.

d) The uniform rotation $(\frac{I_1}{I_3-I_1}\mu_1, 0, \beta)$ of the torque-free gyrostat system is stable with respect to the set of conserved quantities $\{H,F\}$. We observe that the vectors $\vec{M}_{5}+\vec{\mu}=(\frac{I_1}{I_3-I_1}\mu_1+\mu_1,0,\beta)$ and $\vec{\gamma}_{5}=(\frac{\theta\mu_1}{I_3-I_1},0,
\frac{\theta\beta}{I_3})$ are linearly dependent. By using Lemma \ref{solve-system-reduced} and Theorem \ref{semi-decoupled-stability-theory} (a.) we deduce that the equilibrium state $(\vec{M},\vec{\gamma})_{5}$ is stable with respect to the set of conserved quantities $\mathcal{CQ}$.
\end{proof}

Using analogous considerations we can prove the following results.

\begin{thm} In the case of gyrostatic moment along the second axis of inertia, $\mu_1=0,\,\mu_2>0,\,\mu_3=0$, we have the following results about the equilibrium states.
\begin{itemize}
\item [a)] $(\vec{M},\vec{\gamma})_0=(0,0,0,\alpha_1,\alpha_2,\alpha_3)$ is stable with respect to the set of conserved quantities $\mathcal{CQ}$ if and only if $\alpha_1=\alpha_3=0$.
\item [b)] $(\vec{M},\vec{\gamma})_{1-2}=(0,q,0,0,\alpha,0)$ is stable with respect to the set of conserved quantities $\mathcal{CQ}$ if and only if  $q\in [-\frac{I_2\mu_2}{I_2-I_3},\frac{I_2\mu_2}{I_1-I_2}]$ and $q\neq -\mu_2$ or $q=-\mu_2$ and $\alpha=0$.
\item [c)] $(\vec{M},\vec{\gamma})_3=(\beta, \frac{I_2}{I_1-I_2}\mu_2, 0,\frac{\theta\beta}{I_1},\frac{\theta\mu_2}{I_1-I_2},
0)$ and $(\vec{M},\vec{\gamma})_5=(0, \frac{I_2}{I_3-I_2}\mu_2, \beta,0,\frac{\theta\mu_2}{I_3-I_2},
\frac{\theta\beta}{I_3})$ are stable with respect to the set of conserved quantities $\mathcal{CQ}$.

\end{itemize}
\end{thm}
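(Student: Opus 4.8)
The plan is to mirror exactly the proof of Theorem \ref{stability-conserved-quantities-first-axis}, since \eqref{M} is a skew product system \eqref{semi-decoupled} with $y=\vec{M}$ and $z=\vec{\gamma}$: the reduced system \eqref{reduced-system} is the torque-free equation \eqref{torque-free}, whose conserved quantities are $F_1=H$, $F_2=F$ (so $q=2$), while the full system carries the two additional conserved quantities $F_3=C_1$, $F_4=C_2$ (so $k=4$). The engine is Theorem \ref{semi-decoupled-stability-theory}. For each equilibrium $(\vec{M}_e,\vec{\gamma}_e)$ I first read off from Section \ref{torque-free gyrostat}, case {\bf III}, whether the uniform rotation $\vec{M}_e$ is stable with respect to $\{H,F\}$; by statement {\bf I} of that section this is equivalent to Lyapunov stability of $\vec{M}_e$. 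If $\vec{M}_e$ is unstable I invoke Theorem \ref{semi-decoupled-stability-theory} b) to conclude non-stability of $(\vec{M}_e,\vec{\gamma}_e)$ with respect to $\mathcal{CQ}$. If $\vec{M}_e$ is stable with respect to $\{H,F\}$, the hypothesis of Theorem \ref{semi-decoupled-stability-theory} a) holds, and stability with respect to $\mathcal{CQ}$ reduces to deciding whether $\vec{\gamma}_e$ is isolated among the solutions of the algebraic system \eqref{reduced} with $\vec{M}_0=\vec{M}_e$, $\vec{\gamma}_0=\vec{\gamma}_e$; this last question is settled entirely by Lemma \ref{solve-system-reduced}.

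For part a) the rotation $\vec{M}_e=\vec{0}$ is always stable with respect to $\{H,F\}$, so Theorem \ref{semi-decoupled-stability-theory} a) applies and I must test isolation of $\vec{\gamma}_0=(\alpha_1,\alpha_2,\alpha_3)$ in \eqref{reduced} with $\vec{M}_0=\vec{0}$. Here $\vec{M}_0+\vec{\mu}=(0,\mu_2,0)\neq\vec{0}$, so by Lemma \ref{solve-system-reduced} (i) and (iii) the solution is isolated if and only if $(0,\mu_2,0)$ and $(\alpha_1,\alpha_2,\alpha_3)$ are linearly dependent, i.e. if and only if $\alpha_1=\alpha_3=0$. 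For part b) I use case {\bf III}(1): if $q\notin[-\frac{I_2\mu_2}{I_2-I_3},\frac{I_2\mu_2}{I_1-I_2}]$ then $\vec{M}_e=(0,q,0)$ is not stable with respect to $\{H,F\}$, hence Lyapunov unstable by {\bf I}, and Theorem \ref{semi-decoupled-stability-theory} b) gives non-stability. If $q$ lies in that interval, Theorem \ref{semi-decoupled-stability-theory} a) applies and I split on the value of $q$: when $q=-\mu_2$ one has $\vec{M}_0=-\vec{\mu}$, which is precisely the degenerate regime of Lemma \ref{solve-system-reduced}, so by (i) the solution is isolated for $\alpha=0$ and by (ii) it is not isolated (the whole sphere solves \eqref{reduced}) for $\alpha\neq0$; when $q\neq-\mu_2$ the vectors $\vec{M}_0+\vec{\mu}=(0,q+\mu_2,0)$ and $\vec{\gamma}_0=(0,\alpha,0)$ are both multiples of $(0,1,0)$, hence linearly dependent, and (iii) gives isolation and thus stability. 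This reproduces the stated condition.

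For part c) both equilibria have $\vec{M}_e\neq\vec{0}$ and are of the types $\vec{M}_3$, $\vec{M}_5$ shown stable with respect to $\{H,F\}$ in case {\bf III}(2), so again I apply Theorem \ref{semi-decoupled-stability-theory} a). The clean way to verify the isolation hypothesis uniformly is the general observation that at any equilibrium of \eqref{torque-free} with $\vec{M}_e\neq\vec{0}$ the relation $(\vec{M}_e+\vec{\mu})\times\mathbb{I}^{-1}\vec{M}_e=\vec{0}$ forces $\vec{M}_e+\vec{\mu}$ to be parallel to $\mathbb{I}^{-1}\vec{M}_e$; since by Lemma \ref{echilibre-M} (iii) one has $\vec{\gamma}_e=\theta\mathbb{I}^{-1}\vec{M}_e$, the vectors $\vec{M}_e+\vec{\mu}$ and $\vec{\gamma}_e$ are automatically linearly dependent. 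As $\vec{M}_e\neq-\vec{\mu}$ (each $\vec{M}_e+\vec{\mu}$ has a nonzero component coming from the free parameter $\beta\neq0$), Lemma \ref{solve-system-reduced} (iii) yields isolation when $\theta\neq0$, while for $\theta=0$ we have $\vec{\gamma}_e=\vec{0}$ and Lemma \ref{solve-system-reduced} (i) applies; in either case $\vec{\gamma}_e$ is isolated, so both equilibria are stable with respect to $\mathcal{CQ}$. I expect the only delicate point to be the bookkeeping at $q=-\mu_2$ in part b), the single value where $\vec{M}_0=-\vec{\mu}$ pushes the problem out of Lemma \ref{solve-system-reduced} (iii) into the degenerate alternatives (i)/(ii); everything else is the routine verification, in the explicit coordinates, that $\vec{M}_e+\vec{\mu}$ and $\vec{\gamma}_e$ are collinear, which the parallelism argument above makes transparent.
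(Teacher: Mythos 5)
Your proposal is correct and follows exactly the route the paper intends: the paper gives no separate proof here, stating only that it follows by ``analogous considerations'' to Theorem \ref{stability-conserved-quantities-first-axis}, and your argument is precisely that analogue (Section \ref{torque-free gyrostat} case \textbf{III} plus Theorem \ref{semi-decoupled-stability-theory} and Lemma \ref{solve-system-reduced}), with the $q=-\mu_2$ degeneracy handled the same way as $q=-\mu_1$ in the first-axis case. Your only cosmetic deviation is replacing the explicit coordinate check of collinearity of $\vec{M}_e+\vec{\mu}$ and $\vec{\gamma}_e$ in part c) by the general parallelism observation via $\mathbb{I}^{-1}\vec{M}_e$, which is a harmless (indeed slightly cleaner) variant of the same step.
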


\begin{thm} In the case of gyrostatic moment along the third axis of inertia, $\mu_1=0,\,\mu_2=0,\,\mu_3>0$, we have the following results about the equilibrium states.
\begin{itemize}
\item [a)] $(\vec{M},\vec{\gamma})_0=(0,0,0,\alpha_1,\alpha_2,\alpha_3)$ is stable with respect to the set of conserved quantities $\mathcal{CQ}$ if and only if $\alpha_1=\alpha_2=0$.
\item [b)] $(\vec{M},\vec{\gamma})_{1-2}=(0,0,q,0,0,\alpha)$ is stable with respect to the set of conserved quantities $\mathcal{CQ}$ if and only if  $q\in (-\infty, \frac{I_3\mu_3}{I_1-I_3}]\cup (\frac{I_3\mu_3}{I_2-I_3},\infty)$ and $q\neq -\mu_3$ or $q=-\mu_3$ and $\alpha=0$.
\item [c)] $(\vec{M},\vec{\gamma})_3=(\beta, 0, \frac{I_3}{I_1-I_3}\mu_3,\frac{\theta\beta}{I_1},0,
\frac{\theta\mu_3}{I_1-I_3})$ is stable with respect to the set of conserved quantities $\mathcal{CQ}$.
\item [d)] $(\vec{M},\vec{\gamma})_4=(0, \beta, \frac{I_3}{I_2-I_3}\mu_3,0,\frac{\theta\beta}{I_2},
\frac{\theta\mu_3}{I_2-I_3})$ is not stable with respect to the set of conserved quantities $\mathcal{CQ}$.
\end{itemize}
\end{thm}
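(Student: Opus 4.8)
The plan is to follow verbatim the strategy of the proof of Theorem \ref{stability-conserved-quantities-first-axis}, now with the roles of the principal axes permuted so that the gyrostatic moment points along the third axis. I read the system \eqref{M} as the skew product \eqref{semi-decoupled} with $y=\vec{M}$, $z=\vec{\gamma}$, reduced system the torque-free gyrostat \eqref{torque-free}, reduced conserved quantities $\{H,F\}$, and the two additional conserved quantities $C_1,C_2$ (so $q=2$, $k=4$). For each equilibrium state $(\vec{M}_e,\vec{\gamma}_e)$ I first read off from case \textbf{IV} of Section \ref{torque-free gyrostat} whether the associated uniform rotation $\vec{M}_e$ is stable with respect to $\{H,F\}$; by item \textbf{I} of that section this coincides with its Lyapunov stability. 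When $\vec{M}_e$ is Lyapunov unstable I invoke Theorem \ref{semi-decoupled-stability-theory}(b) to conclude at once that $(\vec{M}_e,\vec{\gamma}_e)$ is not stable with respect to $\mathcal{CQ}$; when $\vec{M}_e$ is stable with respect to $\{H,F\}$ I invoke Theorem \ref{semi-decoupled-stability-theory}(a), which reduces the question to whether $\vec{\gamma}_e$ is an isolated solution of the algebraic system \eqref{reduced} with $\vec{M}_0=\vec{M}_e$, $\vec{\gamma}_0=\vec{\gamma}_e$, and this I settle with Lemma \ref{solve-system-reduced}.

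In every ``stable'' branch the decisive step is the collinearity criterion of Lemma \ref{solve-system-reduced}(iii): I test whether $\vec{M}_e+\vec{\mu}$ and $\vec{\gamma}_e$ are linearly dependent. For part (a), $\vec{M}_e=\vec{0}$ is stable with respect to $\{H,F\}$, $\vec{M}_e+\vec{\mu}=(0,0,\mu_3)$, and collinearity with $(\alpha_1,\alpha_2,\alpha_3)$ holds iff $\alpha_1=\alpha_2=0$, yielding the stated equivalence through Lemma \ref{solve-system-reduced}(i),(iii). For part (c), a short computation gives $\vec{M}_e+\vec{\mu}=(\beta,0,\frac{I_1\mu_3}{I_1-I_3})$ and hence $\vec{\gamma}_e=\frac{\theta}{I_1}(\vec{M}_e+\vec{\mu})$, so the two vectors are automatically collinear and Lemma \ref{solve-system-reduced}(i),(iii) give stability. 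For part (d), case \textbf{IV}(3) states that the rotation $\vec{M}_4$ is spectrally unstable, hence Lyapunov unstable, and Theorem \ref{semi-decoupled-stability-theory}(b) yields non-stability. Part (b) is handled by first recording from \textbf{IV}(1.2) that $(0,0,q)$ is stable with respect to $\{H,F\}$ exactly for $q\in(-\infty,\frac{I_3\mu_3}{I_1-I_3}]\cup(\frac{I_3\mu_3}{I_2-I_3},\infty)$; outside this set Theorem \ref{semi-decoupled-stability-theory}(b) gives non-stability, while inside it $\vec{M}_e+\vec{\mu}=(0,0,q+\mu_3)$ and $\vec{\gamma}_e=(0,0,\alpha)$ both lie on the third axis, hence are collinear whenever $q+\mu_3\neq 0$, so Lemma \ref{solve-system-reduced} gives stability for every $\alpha$.

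The only genuinely delicate points sit in part (b). First, the exceptional abscissa $q=-\mu_3$ forces $\vec{M}_e=-\vec{\mu}$, where Lemma \ref{solve-system-reduced}(ii) produces an entire sphere of $\vec{\gamma}$-solutions unless $\vec{\gamma}_e=\vec{0}$, so stability holds iff $\alpha=0$; one must also check that this value actually lies in the stability interval, which it does since $\mu_3>0$ and $I_1>I_3$ give $-\mu_3<0<\frac{I_3\mu_3}{I_1-I_3}$, placing $-\mu_3$ in $(-\infty,\frac{I_3\mu_3}{I_1-I_3}]$. Second, the half-open right endpoint must be tracked: $q=\frac{I_3\mu_3}{I_2-I_3}$ is spectrally stable by \textbf{IV}(1.1) yet fails stability with respect to $\{H,F\}$ by \textbf{IV}(1.2), so by item \textbf{I} it is Lyapunov unstable and is correctly excluded, as the open parenthesis in the statement records. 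Everything else is a routine transcription of the first-axis argument, and I expect no further obstacle.
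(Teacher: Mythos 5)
Your proposal is correct and is essentially the paper's own argument: the paper leaves this theorem to the remark ``Using analogous considerations we can prove the following results,'' and what you have written is exactly that analogue of the proof of Theorem \ref{stability-conserved-quantities-first-axis}, with the same reduction via Theorem \ref{semi-decoupled-stability-theory} and Lemma \ref{solve-system-reduced}, including the correct handling of the exceptional value $q=-\mu_3$ and the half-open endpoint $q=\frac{I_3\mu_3}{I_2-I_3}$ via item \textbf{I} of Section \ref{torque-free gyrostat}.
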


\subsection{The Lyapunov stability of the equilibrium states}

In this section we study the Lyapunov stability of the equilibrium states. To prove the Lyapunov stability of an equilibrium state we use its stability with respect to the set of conserved quantities $\mathcal{CQ}=\{H,C_1,C_2,F\}$. To prove that an equilibrium state is not Lyapunov stable we use its spectrally instability or we apply Theorem \ref{instability}. There exists some cases of equilibrium points for which we can not decide the Lyapunov stability.

\begin{thm}\label{Lyapunov}
Let $(\vec{M}_e,\vec{\gamma}_e)$ be an equilibrium state of \eqref{M}.
\begin{itemize}
\item [(i)] In the case of a gyrostatic moment along the first axis of inertia, $\mu_1>0,\,\mu_2=0,\,\mu_3=0$, and if $(\vec{M}_e,\vec{\gamma}_e)\notin \{(-\mu_1,0,0,\alpha,0,0)\,|\, \alpha\neq 0\}$, then the equilibrium state $(\vec{M}_e,\vec{\gamma}_e)$ is Lyapunov stable if and only if it is stable with respect to the set of conserved quantities $\mathcal{CQ}$.
\item [(ii)] In the case of a gyrostatic moment along the second axis of inertia, $\mu_1=0,\,\mu_2>0,\,\mu_3=0$, and if $(\vec{M}_e,\vec{\gamma}_e)\notin \{(0,-\mu_2,0,0,\alpha,0)\,|\, \alpha\neq 0\}$, then the equilibrium state $(\vec{M}_e,\vec{\gamma}_e)$ is Lyapunov stable if and only if it is stable with respect to the set of conserved quantities $\mathcal{CQ}$.
\item [(iii)] In the case of a gyrostatic moment along the third axis of inertia, $\mu_1=0,\,\mu_2=0,\,\mu_3>0$, and if $(\vec{M}_e,\vec{\gamma}_e)\notin \{(0,0,-\mu_3,0,0,\alpha)\,|\, \alpha\neq 0\}$, then the equilibrium state $(\vec{M}_e,\vec{\gamma}_e)$ is Lyapunov stable if and only if it is stable with respect to the set of conserved quantities $\mathcal{CQ}$.

\end{itemize}
\end{thm}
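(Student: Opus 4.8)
The plan is to prove the two implications separately, noting that the forward direction is essentially free and all the content sits in the converse. If $(\vec{M}_e,\vec{\gamma}_e)$ is stable with respect to $\mathcal{CQ}$, then Theorem \ref{implication-stability} immediately gives Lyapunov stability, and this uses nothing about the excluded family. So the work is to establish the contrapositive of the converse: assuming $(\vec{M}_e,\vec{\gamma}_e)$ lies outside the degenerate set $\{(-\mu_1,0,0,\alpha,0,0)\mid\alpha\neq0\}$ (and its analogues in cases (ii), (iii)) and is \emph{not} stable with respect to $\mathcal{CQ}$, I must show it is Lyapunov unstable. I would organize this around the behaviour of the $\vec{M}$-component, which evolves autonomously under the torque-free equation \eqref{torque-free}, splitting into the case where $\vec{M}_e$ is a Lyapunov unstable uniform rotation and the case where it is stable.

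First I would dispose of the case where $\vec{M}_e$ is Lyapunov unstable for \eqref{torque-free}. By Theorem \ref{partial-semi-decoupled-generality} (ii) the full equilibrium $(\vec{M}_e,\vec{\gamma}_e)$ is then not $\vec{M}$-stable, and by Theorem \ref{partial-semi-decoupled-generality} (i) it is therefore not Lyapunov stable. Going through the list of equilibria, this covers all states with $\vec{M}_e\neq\vec{0}$ that fail to be stable with respect to $\mathcal{CQ}$: the states of type $(\vec{M},\vec{\gamma})_4$, which are spectrally unstable by Section \ref{torque-free gyrostat} and hence Lyapunov unstable uniform rotations, and the states of type $(\vec{M},\vec{\gamma})_{1-2}$ with $q$ outside the stability interval of Section \ref{torque-free gyrostat}, which are Lyapunov unstable by item \textbf{I} of that section. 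Here I would invoke the equivalence ``Lyapunov stable $\Leftrightarrow$ stable with respect to $\{H,F\}$'' for the torque-free gyrostat, so that the instability region is read off directly from the stability-with-respect-to-$\mathcal{CQ}$ computations already done.

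The remaining and genuinely delicate case is when $\vec{M}_e$ is a Lyapunov stable rotation of \eqref{torque-free} yet $(\vec{M}_e,\vec{\gamma}_e)$ is not stable with respect to $\mathcal{CQ}$. Using Theorem \ref{semi-decoupled-stability-theory} (a) together with Lemma \ref{solve-system-reduced}, I would first show that, once the excluded family $\vec{M}_e=-\vec{\mu}$ is set aside, this situation forces $\vec{M}_e=\vec{0}$ with $\vec{\gamma}_e$ not collinear with $\vec{\mu}$; indeed for every other stable $\vec{M}_e$ in the list the vectors $\vec{M}_e+\vec{\mu}$ and $\vec{\gamma}_e$ are collinear, so by Lemma \ref{solve-system-reduced} (iii) the point $\vec{\gamma}_e$ is isolated and the state is stable. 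To prove instability in this surviving case I would exhibit explicit solutions of \eqref{M}: taking $\vec{M}(0)=c\,\vec{e}_j$ (with $\vec{\mu}=\mu_j\vec{e}_j$ and $c$ small) the torque-free part keeps $\vec{M}(t)\equiv c\,\vec{e}_j$, while $\dot{\vec{\gamma}}=\vec{\gamma}\times\mathbb{I}^{-1}\vec{M}$ becomes a steady precession of $\vec{\gamma}$ about the $j$-axis at rate $c/I_j$. Since $\vec{\gamma}_e$ has a nonzero component $\rho>0$ orthogonal to $\vec{e}_j$, after the half-period the solution reaches distance $2\rho$ from $\vec{\gamma}_e$; letting $c\to0$ produces initial data arbitrarily close to the equilibrium whose $\vec{\gamma}$-coordinate nonetheless escapes to a fixed distance, so the state is not $\vec{\gamma}$-stable and hence, by Theorem \ref{partial-semi-decoupled-generality} (i), not Lyapunov stable.

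I expect the main obstacle to be exactly this last step: the general spectral and partial-stability machinery says nothing here, because $\vec{M}_e=\vec{0}$ is spectrally and Lyapunov stable for the reduced flow, so the instability is invisible to linearization and must be produced by hand from the exact precession solution. A secondary point requiring care is the bookkeeping via Lemma \ref{solve-system-reduced} that isolates precisely the family $\vec{M}_e=-\vec{\mu}$, $\vec{\gamma}_e\neq\vec{0}$ as the undecidable case that the theorem must exclude. Finally, the three axis cases (i), (ii), (iii) are interchanged by a permutation of the principal axes, so I would prove case (i) in full and indicate that the others follow by the same argument with the indices permuted.
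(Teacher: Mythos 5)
Your proposal is correct and follows essentially the same route as the paper: the forward implication via Theorem \ref{implication-stability}, lifting of reduced (torque-free) instability to the full system for the $(\vec{M},\vec{\gamma})_4$ states and the $(\vec{M},\vec{\gamma})_{1-2}$ states with $q$ outside the stability set (the paper invokes spectral instability plus the boundary case $q=-\frac{I_1\mu_1}{I_1-I_2}$ separately, where you uniformly use item \textbf{I} and Theorem \ref{partial-semi-decoupled-generality}, a cosmetic difference), and, crucially, the same explicit precession solutions $\vec{M}(t)\equiv(c,0,0)$, $\vec{\gamma}$ rotating at rate $c/I_1$, to kill the equilibria $(\vec{0},\vec{\gamma}_e)$ with $\vec{\gamma}_e$ not collinear with $\vec{\mu}$ — exactly the paper's argument. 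Your identification of the excluded family $\vec{M}_e=-\vec{\mu}$, $\vec{\gamma}_e\neq\vec{0}$ as the sole undecided case also matches the paper's Remark following the theorem.
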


\begin{proof}
$(i)$
An equilibrium point of the form $(\vec{M},\vec{\gamma})_0=(0,0,0,\alpha_1,\alpha_2,\alpha_3)$, with $\alpha_2=\alpha_3=0$, is Lyapunov stable because it is stable with respect to the set of conserved quantities $\mathcal{CQ}$ (see Theorems \ref{stability-conserved-quantities-first-axis} and \ref{implication-stability}).

We prove that if $\alpha_2\neq 0$ or $\alpha_3\neq 0$, then the equilibrium state $(\vec{M},\vec{\gamma})_0$ is not Lyapunov stable. We study the solution of the system \eqref{M} which verify the initial condition $(q,0,0,\alpha_1,\alpha_2,\alpha_3)$. It is easy to see that $M_1(t)=q,\,M_2(t)=M_3(t)=0$. The components $\gamma_1,\,\gamma_2$ and $\gamma_3$ are the solution of the Cauchy problem
$$\dot{\gamma}_1=0,\,\,\dot{\gamma}_2=\frac{q}{I_1}\gamma_3,\,\,\dot{\gamma}_3=-\frac{q}{I_1}\gamma_2,\,\,\gamma_1(0)=\alpha_1,\,\,\gamma_2(0)=\alpha_2,\,\,\gamma_3(0)=\alpha_3.$$
We obtain
$$\gamma_1(t)=\alpha_1,\,\,\gamma_2(t)=\alpha_2\cos(\frac{q}{I_1}t)+\alpha_3\sin(\frac{q}{I_1}t),\,\,\gamma_3(t)=\alpha_3\cos(\frac{q}{I_1}t)-\alpha_2\sin(\frac{q}{I_1}t).$$
We observe that for all $q\in \mathbb{R}^*$ we have $[-|\alpha_2|,|\alpha_2|]\subset \gamma_2(\mathbb{R})$ and $[-|\alpha_3|,|\alpha_3|]\subset \gamma_3(\mathbb{R})$ which implies that the equilibrium state $(\vec{M},\vec{\gamma})_0$ is not Lyapunov stable.

If $q\in (-\infty, -\frac{I_1\mu_1}{I_1-I_2})\cup [-\frac{I_1\mu_1}{I_1-I_3},\infty)$ and $q\neq -\mu_1$ or $q=-\mu_1$ and $\alpha=0$, then by using Theorems \ref{stability-conserved-quantities-first-axis} and \ref{implication-stability} we deduce that the equilibrium state $(\vec{M},\vec{\gamma})_{1-2}=(q,0,0,\alpha,0,0)$ is Lyapunov stable.

If $q\in (-\frac{I_1\mu_1}{I_1-I_2}, -\frac{I_1\mu_1}{I_1-I_3})$ then the equilibrium state $(\vec{M},\vec{\gamma})_{1-2}=(q,0,0,\alpha,0,0)$ is not spectrally stable and consequently, it is not Lyapunov stable.

If $q=-\frac{I_1\mu_1}{I_1-I_2}$, then by using the results of Section \ref{torque-free gyrostat} we have that the uniform rotation $(q,0,0)$ of the torque-free gyrostat equation \eqref{torque-free} is not Lyapunov stable and, consequently the equilibrium state $(\vec{M},\vec{\gamma})_{1-2}=(-\frac{I_1\mu_1}{I_1-I_2},0,0,\alpha,0,0)$ is not Lyapunov stable for the system \eqref{M}.

According to Theorem \ref{spectral-stability} and the results of Section \ref{torque-free gyrostat} the equilibrium state $(\vec{M},\vec{\gamma})_4$ is not spectrally stable and consequently, it is not Lyapunov stable.

An equilibrium state $(\vec{M},\vec{\gamma})_5$ is Lyapunov stable because it is stable with respect to the set of conserved quantities $\mathcal{CQ}$.
\medskip

Analogously we obtain the results enounced in $(ii)$ and $(iii)$.

\end{proof}

\begin{rem}
In the above theorem and for the case of a gyrostatic moment along the first axis of inertia, $\mu_1>0,\,\mu_2=0,\,\mu_3=0$, the Lyapunov stability of the equilibrium states of the form $(-\mu_1,0,0,\alpha,0,0)$ with $\alpha\neq 0$ cannot be decided. This equilibrium states have the properties:
\begin{itemize}
\item [(i)] They are spectrally stable, see Theorem \ref{spectral-stability} and the results of Section \ref{torque-free gyrostat}.
\item [(ii)] They are not stable with respect to the set of conserved quantities $\mathcal{CQ}$  (see Theorem \ref{stability-conserved-quantities-first-axis}).
\item [(iii)] All of them are Lyapunov stable or all of them are not Lyapunov stable. To prove this affirmation we observe that if $(\vec{M},\vec{\gamma})$ is a solution of \eqref{M} and $a$ is a real number, then $(\vec{M},a\cdot\vec{\gamma})$ is also a solution of \eqref{M}.
\item [(iv)] All o them are Lyapunov stable on the leaf $(H,F,C_1,C_2)(\vec{M},\vec{\gamma})=(H,F,C_1,C_2)(-\mu_1,0,0,\alpha,0,0)$. This leaf is characterized by the relations $\vec{M}=-\vec{\mu}$ and $\vec{\gamma}^2=\alpha^2$. The solution of \eqref{M} which start from $\vec{M}_0=-\vec{\mu},\,\,\vec{\gamma}_0=(\gamma_{10},\gamma_{20},\gamma_{30})$ are $\vec{M}(t)=-\vec{\mu}$ and $\gamma_1(t)=\gamma_{10}$, $\gamma_2(t)=\gamma_{20}\cos(\frac{\mu_1}{I_1}t)-\gamma_{30}\sin(\frac{\mu_1}{I_1}t)$, $\gamma_3(t)=\gamma_{30}\cos(\frac{\mu_1}{I_1}t)
+\gamma_{20}\sin(\frac{\mu_1}{I_1}t)$.
\end{itemize}

For the case of a gyrostatic moment along the second axis of inertia, $\mu_1=0,\,\mu_2>0,\,\mu_3=0$ the Lyapunov stability of the equilibrium states of the form $(0,-\mu_2,0,0,\alpha,0)$ with $\alpha\neq 0$ cannot be decided.

For the case of a gyrostatic moment along the second axis of inertia, $\mu_1=0,\,\mu_2=0,\,\mu_3>0$ the Lyapunov stability of the equilibrium states of the form $(0,0,-\mu_3,0,0,\alpha)$ with $\alpha\neq 0$ cannot be decided.
\end{rem}

\subsection{Partial stability of the equilibrium states}

A practical interest is for the following problems: "is an equilibrium state $\vec{M}$-stable?" or "is it $\vec{\gamma}$-stable?". The problem of $\vec{M}$-stability of an equilibrium state is solved by using Theorem \ref{partial-semi-decoupled-generality} and the results of the paper \cite{comanescu} (see Section \ref{torque-free gyrostat}). In this section we study $\vec{\gamma}$-stability of an equilibrium state of the system \eqref{M}.
We use Theorem 2.3 from \cite{birtea-comanescu} for the system \eqref{M} and the vectorial conserved quantity $(C_1,C_2,F)$. We have that the set
$$\mathcal{M}=\{(\vec{M},\vec{\gamma})\,|\,\text{rank} \left(
                                                   \begin{array}{c}
                                                     \nabla C_1 \\
                                                     \nabla C_2 \\
                                                     \nabla F \\
                                                   \end{array}
                                                 \right)
(\vec{M},\vec{\gamma})=2\}$$
is an invariant set for the dynamics generated by \eqref{M}. By a direct computation we have that
$$\nabla C_1(\vec{M},\vec{\gamma})=(\vec{0},\vec{\gamma}),\,\,\nabla C_2(\vec{M},\vec{\gamma})=(\vec{\gamma},\vec{M}+\vec{\mu}),\,\,\nabla F(\vec{M},\vec{\gamma})=(\vec{M}+\vec{\mu},\vec{0}),$$
and consequently, we have
$$\mathcal{M}=\{(\vec{M},\vec{\gamma})\,|\,\vec{M}+\vec{\mu} \,\,\text{and}\,\,\vec{\gamma}\,\,\text{are linear dependent}\}\setminus \{(-\vec{\mu},\vec{0})\}.$$
The dynamics in the invariant set $\mathcal{M}$ has the following property.

\begin{lem}\label{M-invariant}
If the point $(\vec{M}_0,\vec{\gamma}_0)\in \mathcal{M}$ satisfy $\vec{M}_0+\vec{\mu}=\delta\vec{\gamma}_0$ with $\delta\in \mathbb{R}^*$, then the set $$\mathcal{M}_{(\vec{M}_0,\vec{\gamma}_0)}=\{(\vec{M},\vec{\gamma})\,|\,\vec{M}-\delta\vec{\gamma}+\vec{\mu}=\vec{0}\}$$ is an invariant set of the dynamics generated by \eqref{M}.
\end{lem}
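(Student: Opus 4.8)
The plan is to exhibit $\mathcal{M}_{(\vec{M}_0,\vec{\gamma}_0)}$ as the zero set of the affine map $s(\vec{M},\vec{\gamma})=\vec{M}-\delta\vec{\gamma}+\vec{\mu}$ and to show that, evaluated along any solution, the vector-valued quantity $s$ obeys a homogeneous linear differential equation; uniqueness of solutions will then force $s$ to remain zero once it starts at zero. First I would fix an arbitrary solution $(\vec{M}(t),\vec{\gamma}(t))$ of \eqref{M}, set $s(t)=\vec{M}(t)-\delta\vec{\gamma}(t)+\vec{\mu}$, and differentiate using the two equations of the system.

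The computational heart of the argument is the identity
$$\dot{s}=(\vec{M}+\vec{\mu})\times\mathbb{I}^{-1}\vec{M}-\delta\,(\vec{\gamma}\times\mathbb{I}^{-1}\vec{M}).$$
The key step is to rewrite $\vec{M}+\vec{\mu}=s+\delta\vec{\gamma}$, substitute this into the first term, and observe that the two $\delta\,(\vec{\gamma}\times\mathbb{I}^{-1}\vec{M})$ contributions cancel, leaving $\dot{s}=s\times\mathbb{I}^{-1}\vec{M}$, equivalently $\dot{s}=-\widehat{\mathbb{I}^{-1}\vec{M}(t)}\,s$. Here $\delta\neq0$ is not needed for the cancellation itself, but it is what guarantees that the reference point belongs to the set, since $\vec{M}_0+\vec{\mu}=\delta\vec{\gamma}_0$ gives $s(\vec{M}_0,\vec{\gamma}_0)=\vec{0}$.

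This identity shows that $s(t)$ satisfies a linear homogeneous system $\dot{s}=A(t)s$ with continuous (indeed smooth) coefficient matrix $A(t)=-\widehat{\mathbb{I}^{-1}\vec{M}(t)}$. Since $s\equiv\vec{0}$ is a solution and solutions of linear systems with continuous coefficients are unique, any trajectory with $s(0)=\vec{0}$, that is, one starting in $\mathcal{M}_{(\vec{M}_0,\vec{\gamma}_0)}$, satisfies $s(t)=\vec{0}$ for all $t$ in its interval of existence; hence the set is invariant. I expect no serious obstacle: the only points requiring care are the bilinearity of the cross product, which is what makes $s\times\mathbb{I}^{-1}\vec{M}$ genuinely linear in $s$, and the routine algebraic cancellation that isolates $s$ on the right-hand side.
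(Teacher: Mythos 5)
Your proof is correct, but it follows a genuinely different route from the paper's. The paper never differentiates $s$ along a trajectory: it first invokes the invariance of the larger set $\mathcal{M}$ (supplied by Theorem 2.3 of \cite{birtea-comanescu} on invariant critical sets of conserved quantities), so that along a solution starting in $\mathcal{M}_{(\vec{M}_0,\vec{\gamma}_0)}$ one can write $\vec{M}(t)+\vec{\mu}=h(t)\vec{\gamma}(t)$ for some scalar function $h$, and then it identifies $h$ via the conserved quantities: dotting with $\vec{\gamma}(t)$ gives $h(t)=C_2/(2C_1)$ evaluated along the solution, which is constant and equal to $\delta$ (the paper actually writes $C_2/C_1$, overlooking the factor $\tfrac{1}{2}$ in $C_1$ --- a harmless slip). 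Your argument --- observe that $s(t)=\vec{M}(t)-\delta\vec{\gamma}(t)+\vec{\mu}$ satisfies the linear nonautonomous equation $\dot{s}=s\times\mathbb{I}^{-1}\vec{M}(t)$ and conclude $s\equiv\vec{0}$ by uniqueness against the zero solution --- is self-contained: it uses neither the invariance of $\mathcal{M}$ (a nontrivial imported result) nor the conservation of $C_1$ and $C_2$. It is also more robust at one exceptional point: $(-\vec{\mu},\vec{0})$ belongs to $\mathcal{M}_{(\vec{M}_0,\vec{\gamma}_0)}$ but is explicitly excluded from $\mathcal{M}$, so the paper's inclusion $\mathcal{M}_{(\vec{M}_0,\vec{\gamma}_0)}\subset\mathcal{M}$ is not literally true and its quotient $h(t)$ is undefined there because $\vec{\gamma}=\vec{0}$ (one must separately remark that this point is an equilibrium); your computation needs no such case distinction and, as you note, does not even use $\delta\neq 0$. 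What the paper's proof buys in exchange is methodological coherence --- the lemma is deduced from the same conserved-quantity and invariant-set machinery used throughout the paper --- whereas yours is the shorter, more elementary, and slightly more general verification.
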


\begin{proof}
We observe that $\vec{\gamma}_0\neq \vec{0}$ and $(\vec{M}_0,\vec{\gamma}_0)\in\mathcal{M}_{(\vec{M}_0,\vec{\gamma}_0)}\subset \mathcal{M}$.
A solution $(\vec{M}(t),\vec{\gamma}(t))$ of \eqref{M} which start from $\mathcal{M}_{(\vec{M}_0,\vec{\gamma}_0)}$ verify $\vec{M}(t)+\vec{\mu}=\delta\vec{\gamma}(t)$.
By the construction of $\mathcal{M}$ we have that there exists a function $h$ such that $\vec{M}(t)+\vec{\mu}=h(t)\vec{\gamma}(t)$.
We deduce
$(\vec{M}(t)+\vec{\mu})\cdot\vec{\gamma}(t)=h(t)\vec{\gamma}^2(t)$
and using the conserved quantities $C_1$ and $C_2$ we obtain
$h(t)=\frac{C_2(\vec{M}(0),\vec{\gamma}(0)}{C_1(\vec{M}(0),\vec{\gamma}(0))}=\delta.$
\end{proof}

\begin{thm} In the case of a gyrostatic moment along the first axis of inertia, $\mu_1>0,\,\mu_2=0,\,\mu_3=0$, we have the following results about the $\vec{\gamma}$-stability of equilibrium states.
\begin{itemize}
\item [a)] $(\vec{M},\vec{\gamma})_0=(0,0,0,\alpha_1,\alpha_2,\alpha_3)$ is $\vec{\gamma}$-stable if and only if $\alpha_2=\alpha_3=0$.
\item [b.1)] $(\vec{M},\vec{\gamma})_{1-2}=(q,0,0,\alpha,0,0)$ is $\vec{\gamma}$-stable if $\alpha=0$ or $q\in (-\infty, -\frac{I_1\mu_1}{I_1-I_2})\cup [-\frac{I_1\mu_1}{I_1-I_3},\infty)$ and $q\neq -\mu_1$.
\item [b.2)] $(\vec{M},\vec{\gamma})_{1-2}=(q,0,0,\alpha,0,0)$ is not $\vec{\gamma}$-stable if  $\alpha\neq 0$ and $q\in [-\frac{I_1\mu_1}{I_1-I_2}, -\frac{I_1\mu_1}{I_1-I_3})$.
\item [c)] $(\vec{M},\vec{\gamma})_4=(\frac{I_1}{I_2-I_1}\mu_1, \beta,0,\frac{\theta\mu_1}{I_2-I_1},\frac{\theta\beta}{I_2},0)$ is $\vec{\gamma}$-stable if and only if $\theta=0$.
\item [d)] $(\vec{M},\vec{\gamma})_5=(\frac{I_1}{I_3-I_1}\mu_1, 0, \beta,\frac{\theta\mu_1}{I_3-I_1},0,
\frac{\theta\beta}{I_3})$ is $\vec{\gamma}$-stable.
\end{itemize}
\end{thm}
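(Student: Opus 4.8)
The plan is to handle the four items separately, drawing on three mechanisms: the Casimir $C_1$ to obtain $\vec\gamma$-stability for free whenever $\vec\gamma_e=\vec0$; the implication chain ``stable with respect to $\mathcal{CQ}$ $\Rightarrow$ Lyapunov stable $\Rightarrow$ $\vec\gamma$-stable'' (Theorems \ref{implication-stability} and \ref{partial-semi-decoupled-generality}(i)) for the remaining stable cases; and Theorem \ref{instability}, applied on the affine invariant set furnished by Lemma \ref{M-invariant}, for the unstable ones.

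I would begin by isolating two general observations. Since $C_1=\frac12\vec\gamma\cdot\vec\gamma$ is conserved, $\|\vec\gamma(t)\|$ is constant along every solution of \eqref{M}; hence any equilibrium with $\vec\gamma_e=\vec0$ is $\vec\gamma$-stable (choose $\delta=\varepsilon$). This settles the branch $\alpha=0$ of (b.1) for every $q$---including the spectrally unstable range, where $\vec\gamma$-stability survives even though full stability fails---and the branch $\theta=0$ of (c). Next, whenever an equilibrium is stable with respect to $\mathcal{CQ}$, Theorem \ref{implication-stability} yields Lyapunov stability and Theorem \ref{partial-semi-decoupled-generality}(i) yields $\vec\gamma$-stability; combined with Theorem \ref{stability-conserved-quantities-first-axis} this disposes of the good-range, $q\neq-\mu_1$ branch of (b.1) through part (b), and of all of (d) through part (d).

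For item (a), I would observe that $\vec M_e=\vec0$ is Lyapunov stable for the torque-free reduced system \eqref{torque-free}, since $H(\vec M)=0$ forces $\vec M=\vec0$ by positive definiteness of $\mathbb{I}^{-1}$, so $\vec0$ is stable with respect to $\{H\}$ and a fortiori to $\{H,F\}$. Then Theorem \ref{partial-semi-decoupled-generality}(iii) identifies $\vec\gamma$-stability of $(\vec0,\vec\gamma_e)$ with its full Lyapunov stability, and Theorem \ref{Lyapunov}(i) characterizes the latter by $\alpha_2=\alpha_3=0$; this gives the claimed equivalence.

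The instability statements---(b.2) and the branch $\theta\neq0$ of (c)---are the core of the argument and where I expect the effort to concentrate. In both, $\vec M_e+\vec\mu$ is a nonzero multiple $\delta\vec\gamma_e$ of $\vec\gamma_e$ with $\vec\gamma_e\neq\vec0$: for (b.2) a short computation gives $\delta=\frac{q+\mu_1}{\alpha}$, finite because $\alpha\neq0$ and nonzero because $-\mu_1$ lies strictly to the right of the interval $[-\frac{I_1\mu_1}{I_1-I_2},-\frac{I_1\mu_1}{I_1-I_3})$, so $q\neq-\mu_1$; for (c), simplifying $\vec M_e+\vec\mu=(\frac{I_2\mu_1}{I_2-I_1},\beta,0)$ gives $\delta=\frac{I_2}{\theta}\neq0$. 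Lemma \ref{M-invariant} then supplies the invariant set $\mathcal{M}_{(\vec M_e,\vec\gamma_e)}=\{s=0\}$ with $s(\vec M,\vec\gamma)=\vec M-\delta\vec\gamma+\vec\mu$, whose Jacobian blocks $\frac{\partial s}{\partial\vec M}$ (the identity) and $\frac{\partial s}{\partial\vec\gamma}$ ($-\delta$ times the identity) are invertible. Thus all hypotheses of Theorem \ref{instability} hold once $\vec M_e$ is known to be Lyapunov unstable for \eqref{torque-free}, and the conclusion denies $\vec\gamma$-stability. I expect the single delicate point to be this last verification in (b.2): on the open gap $(-\frac{I_1\mu_1}{I_1-I_2},-\frac{I_1\mu_1}{I_1-I_3})$ spectral instability (Theorem \ref{spectral-stability} and Section \ref{torque-free gyrostat}) suffices, but at the left endpoint $q=-\frac{I_1\mu_1}{I_1-I_2}$ the rotation is spectrally stable, so one must instead invoke its failure to be stable with respect to $\{H,F\}$ together with statement I of Section \ref{torque-free gyrostat} to deduce Lyapunov instability; for (c) the rotation $\vec M_4$ is plainly spectrally unstable. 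Pairing these instabilities with the $\theta=0$ branch yields the ``if and only if'' in (c).
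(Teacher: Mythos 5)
Your proposal is correct and follows essentially the same route as the paper: conservation of $C_1$ for the $\vec{\gamma}_e=\vec{0}$ cases, Lyapunov stability via CQ-stability (i.e.\ Theorem \ref{Lyapunov}, or equivalently Theorems \ref{stability-conserved-quantities-first-axis}, \ref{implication-stability} and \ref{partial-semi-decoupled-generality}) for items a), b.1) and d), and Theorem \ref{instability} applied on the invariant set of Lemma \ref{M-invariant} for b.2) and the $\theta\neq 0$ branch of c). Your handling of the delicate endpoint $q=-\frac{I_1\mu_1}{I_1-I_2}$ (spectrally stable but not stable with respect to $\{H,F\}$, hence Lyapunov unstable by statement I of Section \ref{torque-free gyrostat}) is exactly how the paper settles that case, and your coefficient $\delta=\frac{q+\mu_1}{\alpha}$ in the function $s$ in fact corrects a reciprocal typo in the paper's formula.
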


\begin{proof}
$a)$ The uniform rotation $(0,0,0)$ is Lyapunov stable for the system \eqref{torque-free}. Using Theorem \ref{partial-semi-decoupled-generality} we deduce that $(\vec{M},\vec{\gamma})_0$ is Lyapunov stable if and only if it is $\vec{\gamma}$-stable. Using Theorem \ref{Lyapunov} we obtain the enounced results.

{\it b.1)} If $\alpha=0$, then we obtain the $\vec{\gamma}$-stability of $(\vec{M},\vec{\gamma})_{1-2}$ by definition and using the conserved quantity $C_1$.
If $q\in (-\infty, -\frac{I_1\mu_1}{I_1-I_2})\cup [-\frac{I_1\mu_1}{I_1-I_3},\infty)$ and $q\neq -\mu_1$, then we obtain the $\vec{\gamma}$-stability of $(\vec{M},\vec{\gamma})_{1-2}$ by using Theorem \ref{Lyapunov}.

{\it b.2)} For the cases in which $\alpha\neq 0$ and $q\in [-\frac{I_1\mu_1}{I_1-I_2}, -\frac{I_1\mu_1}{I_1-I_3})$ the equilibrium state $(\vec{M},\vec{\gamma})_{1-2}$ is not $\vec{M}$-stable. We observe that $(\vec{M},\vec{\gamma})_{1-2}\in \mathcal{M}$ and by using Lemma \ref{M-invariant} we deduce that $\mathcal{M}_{(\vec{M}_{1-2},\vec{\gamma}_{1-2})}$ is an invariant set of the dynamics generated by \eqref{M}. We can apply Theorem \ref{instability} for the function $s(\vec{M},\vec{\gamma})=\vec{M}-\frac{\alpha}{q+\mu_1}\vec{\gamma}+\vec{\mu}$ to deduce the $\vec{\gamma}$-instability of the equilibrium state.

{\it c)} The $\vec{\gamma}$-stability of $(\vec{M},\vec{\gamma})_4$ for $\theta=0$ is proved by using the conserved quantity $C_1$. If $\theta\neq 0$ we have that the equilibrium state $(\vec{M},\vec{\gamma})_4$ is not $\vec{M}$-stable. We use the Lemma \ref{M-invariant} to decide that the equilibrium state $(\vec{M},\vec{\gamma})_4$ is not $\vec{\gamma}$-stable.

{\it d)} The enounced result is the consequence of Theorem \ref{Lyapunov}.

\end{proof}

In the case of a gyrostatic moment along the first axis of inertia, $\mu_1>0,\,\mu_2=0,\,\mu_3=0$, we do not have a result for the $\vec{\gamma}$-stability of the equilibrium states $(-\mu_1,0,0,\alpha,0,0)$ with $\alpha\neq 0$.

\begin{thm} In the case of gyrostatic moment along the second axis of inertia, $\mu_1=0,\,\mu_2>0,\,\mu_3=0$, we have the following results about the equilibrium states.
\begin{itemize}
\item [a)] $(\vec{M},\vec{\gamma})_0=(0,0,0,\alpha_1,\alpha_2,\alpha_3)$ is $\vec{\gamma}$-stable if and only if $\alpha_1=\alpha_3=0$.
\item [b.1)] $(\vec{M},\vec{\gamma})_{1-2}=(0,q,0,0,\alpha,0)$ is $\vec{\gamma}$-stable if $\alpha=0$ or $q\in [-\frac{I_2\mu_2}{I_2-I_3},\frac{I_2\mu_2}{I_1-I_2}]$ and $q\neq -\mu_2$.
\item [b.2)] $(\vec{M},\vec{\gamma})_{1-2}=(0,q,0,0,\alpha,0)$ is not $\vec{\gamma}$-stable if $\alpha\neq 0$ and $q\notin [-\frac{I_2\mu_2}{I_2-I_3},\frac{I_2\mu_2}{I_1-I_2}]$.
\item [c)] $(\vec{M},\vec{\gamma})_3=(\beta, \frac{I_2}{I_1-I_2}\mu_2, 0,\frac{\theta\beta}{I_1},\frac{\theta\mu_2}{I_1-I_2},
0)$ is $\vec{\gamma}$-stable.
\item [d)] $(\vec{M},\vec{\gamma})_5=(0, \frac{I_2}{I_3-I_2}\mu_2, \beta,0,\frac{\theta\mu_2}{I_3-I_2},
\frac{\theta\beta}{I_3})$ is $\vec{\gamma}$-stable.
\end{itemize}
\end{thm}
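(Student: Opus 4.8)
The plan is to mirror the strategy already used for the $\vec{\gamma}$-stability theorem along the first axis, exploiting that for the system \eqref{M} the reduced variable is $y=\vec{M}$ and the driven variable is $z=\vec{\gamma}$; hence $\vec{M}$-stability is $y$-stability and $\vec{\gamma}$-stability is $z$-stability, so Theorems \ref{partial-semi-decoupled-generality} and \ref{instability} apply verbatim. For every equilibrium that is stable with respect to $\mathcal{CQ}$ I would simply chain ``stable with respect to $\mathcal{CQ}$ $\Rightarrow$ Lyapunov stable (Theorem \ref{implication-stability}) $\Rightarrow$ $\vec{\gamma}$-stable (Theorem \ref{partial-semi-decoupled-generality} (i))'', so the only genuinely new work is the instability case b.2).

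I would start with a): the uniform rotation $\vec{M}_e=(0,0,0)$ is Lyapunov stable for \eqref{torque-free}, so Theorem \ref{partial-semi-decoupled-generality} (iii) gives that $(\vec{M},\vec{\gamma})_0$ is Lyapunov stable if and only if it is $\vec{\gamma}$-stable. Since $(0,0,0,\alpha_1,\alpha_2,\alpha_3)$ is not of the undecidable form $(0,-\mu_2,0,0,\alpha,0)$, Theorem \ref{Lyapunov} (ii) converts this into the equivalence with stability with respect to $\mathcal{CQ}$, which by the preceding second-axis conserved-quantity theorem holds exactly when $\alpha_1=\alpha_3=0$. For b.1) with $\alpha=0$ I would argue directly from the definition using $C_1=\frac{1}{2}\vec{\gamma}\cdot\vec{\gamma}$: if $\vec{\gamma}(0)$ is small then $\|\vec{\gamma}(t)\|^2=2C_1$ stays small. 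For the remaining $q$ in the stable interval with $q\neq-\mu_2$, together with c) and d), the equilibria are stable with respect to $\mathcal{CQ}$, hence Lyapunov stable and in particular $\vec{\gamma}$-stable; none of these lies in the excluded set, since the $\vec{M}$-component is never $(0,-\mu_2,0)$ (because $\beta\neq0$ in $_3$ and $_5$, and $q\neq-\mu_2$ in $_{1-2}$).

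The hard part is b.2), which requires Theorem \ref{instability}. For $\alpha\neq0$ and $q\notin[-\frac{I_2\mu_2}{I_2-I_3},\frac{I_2\mu_2}{I_1-I_2}]$, the result ${\bf I}$ and the data of Section \ref{torque-free gyrostat} show that the uniform rotation $(0,q,0)$ is not Lyapunov stable, so $\vec{M}_e$ is unstable for \eqref{reduced-system} and, by Theorem \ref{partial-semi-decoupled-generality} (ii), $(\vec{M},\vec{\gamma})_{1-2}$ is not $\vec{M}$-stable. To upgrade this to $\vec{\gamma}$-instability I would verify $(\vec{M},\vec{\gamma})_{1-2}\in\mathcal{M}$: the vectors $\vec{M}+\vec{\mu}=(0,q+\mu_2,0)$ and $\vec{\gamma}=(0,\alpha,0)$ are collinear with $\vec{\gamma}\neq\vec{0}$. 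Because $\mu_2>0$ and $I_1>I_2>I_3$, the point $-\mu_2$ lies in the interior of $[-\frac{I_2\mu_2}{I_2-I_3},\frac{I_2\mu_2}{I_1-I_2}]$, so $q\neq-\mu_2$ and $\delta:=(q+\mu_2)/\alpha\in\mathbb{R}^*$. Lemma \ref{M-invariant} then furnishes the invariant set $\mathcal{M}_{(\vec{M}_{1-2},\vec{\gamma}_{1-2})}$, and I would apply Theorem \ref{instability} to $s(\vec{M},\vec{\gamma})=\vec{M}-\delta\vec{\gamma}+\vec{\mu}$, whose partial Jacobians $\frac{\partial s}{\partial\vec{M}}$ (the $3\times3$ identity) and $\frac{\partial s}{\partial\vec{\gamma}}=-\delta\cdot(\text{identity})$ are invertible, yielding the $\vec{\gamma}$-instability. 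The main obstacle is exactly checking these hypotheses of Theorem \ref{instability}—membership in $\mathcal{M}$, invertibility of the two partial Jacobians of $s$, and in particular that $\delta\neq0$, which is secured by the location of $-\mu_2$ strictly inside the stability interval.
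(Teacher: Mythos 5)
Your proposal is correct and takes essentially the same approach as the paper: the paper proves only the first-axis analogue of this theorem (combining Theorem \ref{Lyapunov}, Theorem \ref{partial-semi-decoupled-generality}, Lemma \ref{M-invariant} and Theorem \ref{instability}) and leaves the second-axis case to the identical argument, which is exactly what you carry out, including the only genuinely new step b.2). If anything, your treatment of b.2) is more careful than the paper's first-axis model proof: you verify membership in $\mathcal{M}$, the invertibility of both partial Jacobians of $s$, and that $\delta=(q+\mu_2)/\alpha\neq 0$ because $-\mu_2$ lies strictly inside the stability interval, whereas the paper's corresponding function $s(\vec{M},\vec{\gamma})=\vec{M}-\frac{\alpha}{q+\mu_1}\vec{\gamma}+\vec{\mu}$ even carries a typo (the coefficient should be $\frac{q+\mu_1}{\alpha}$, as in Lemma \ref{M-invariant}, which you use correctly).
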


We do not have a result for the $\vec{\gamma}$-stability of the equilibrium states $(0,-\mu_2,0,0,\alpha,0)$ with $\alpha\neq 0$.

\begin{thm} In the case of gyrostatic moment along the third axis of inertia, $\mu_1=0,\,\mu_2=0,\,\mu_3>0$, we have the following results about the equilibrium states.
\begin{itemize}
\item [a)] $(\vec{M},\vec{\gamma})_0=(0,0,0,\alpha_1,\alpha_2,\alpha_3)$ is $\vec{\gamma}$-stable if and only if $\alpha_1=\alpha_2=0$.
\item [b.1)] $(\vec{M},\vec{\gamma})_{1-2}=(0,0,q,0,0,\alpha)$ is $\vec{\gamma}$-stable if $\alpha=0$ or $q\in (-\infty, \frac{I_3\mu_3}{I_1-I_3}]\cup (\frac{I_3\mu_3}{I_2-I_3},\infty)$ and $q\neq -\mu_3$.
\item [b.2)] $(\vec{M},\vec{\gamma})_{1-2}=(0,0,q,0,0,\alpha)$ is not $\vec{\gamma}$-stable if  $\alpha\neq 0$ and $q\in (\frac{I_3\mu_3}{I_1-I_3}, \frac{I_3\mu_3}{I_2-I_3}]$.
\item [c)] $(\vec{M},\vec{\gamma})_3=(\beta, 0, \frac{I_3}{I_1-I_3}\mu_3,\frac{\theta\beta}{I_1},0,
\frac{\theta\mu_3}{I_1-I_3})$ is $\vec{\gamma}$-stable.
\item [d)] $(\vec{M},\vec{\gamma})_4=(0, \beta, \frac{I_3}{I_2-I_3}\mu_3,0,\frac{\theta\beta}{I_2},
\frac{\theta\mu_3}{I_2-I_3})$ is $\vec{\gamma}$-stable if and only if $\theta=0$.
\end{itemize}
\end{thm}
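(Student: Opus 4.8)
The plan is to mirror, item by item, the argument already carried out for the first axis of inertia, since for $\mu_1=\mu_2=0,\ \mu_3>0$ the same three instruments apply: the partial-stability dictionary of Theorem~\ref{partial-semi-decoupled-generality}, the equivalence between Lyapunov stability and stability with respect to $\mathcal{CQ}$ supplied by Theorem~\ref{Lyapunov}(iii), and the instability mechanism obtained by combining Lemma~\ref{M-invariant} with Theorem~\ref{instability}. The reduced system is the torque-free gyrostat \eqref{torque-free} with $\vec{\mu}=(0,0,\mu_3)$, whose uniform rotations are classified in Section~\ref{torque-free gyrostat}, case IV; I shall feed the spectral and $\{H,F\}$-stability verdicts listed there into the two theorems above.

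For the stable conclusions I would proceed as follows. In part a) the reduced equilibrium is $\vec{M}_e=(0,0,0)$, which is Lyapunov stable for \eqref{torque-free}; hence Theorem~\ref{partial-semi-decoupled-generality}(iii) makes Lyapunov stability of $(\vec{M},\vec{\gamma})_0$ equivalent to its $\vec{\gamma}$-stability, while Theorem~\ref{Lyapunov}(iii) reduces Lyapunov stability to stability with respect to $\mathcal{CQ}$, which holds precisely when $\alpha_1=\alpha_2=0$. For part b.1) the subcase $\alpha=0$ is immediate: then $\vec{\gamma}_e=\vec{0}$ and the Casimir $C_1=\tfrac12\vec{\gamma}\cdot\vec{\gamma}$ confines every nearby trajectory to a small sphere about the origin, giving $\vec{\gamma}$-stability straight from the definition (independently of the behaviour of $\vec{M}$). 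In the remaining range $q\in(-\infty,\tfrac{I_3\mu_3}{I_1-I_3}]\cup(\tfrac{I_3\mu_3}{I_2-I_3},\infty)$ with $q\neq-\mu_3$, the state is stable with respect to $\mathcal{CQ}$ and is not one of the exceptional states $(0,0,-\mu_3,0,0,\alpha)$, so Theorem~\ref{Lyapunov}(iii) yields Lyapunov stability and Theorem~\ref{partial-semi-decoupled-generality}(i) then gives $\vec{\gamma}$-stability. Part c) is of the same nature: $(\vec{M},\vec{\gamma})_3$ is stable with respect to $\mathcal{CQ}$ and non-exceptional, hence Lyapunov stable and therefore $\vec{\gamma}$-stable. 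Finally, the $\theta=0$ direction of part d) again follows from $C_1$, since there $\vec{\gamma}_e=\vec{0}$.

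The substantive work is the instability half, namely part b.2) and the $\theta\neq0$ direction of part d), and this is where I expect the main obstacle. In both situations the reduced rotation fails to be Lyapunov stable: for $q\in(\tfrac{I_3\mu_3}{I_1-I_3},\tfrac{I_3\mu_3}{I_2-I_3})$ by spectral instability (case IV (1.1)), at the endpoint $q=\tfrac{I_3\mu_3}{I_2-I_3}$ because it fails $\{H,F\}$-stability (case IV (1.2)), these being equivalent by item I of Section~\ref{torque-free gyrostat}, and for $\vec{M}_4$ with $\theta\neq0$ by spectral instability (case IV (3)). By Theorem~\ref{partial-semi-decoupled-generality}(ii) the state is then not $\vec{M}$-stable, but that alone does not settle $\vec{\gamma}$-stability, which is the real difficulty. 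To close the gap I would verify that each equilibrium lies in $\mathcal{M}$ by exhibiting the linear dependence of $\vec{M}_e+\vec{\mu}$ and $\vec{\gamma}_e$: for b.2) one has $\vec{M}_e+\vec{\mu}=(0,0,q+\mu_3)=\delta\,\vec{\gamma}_e$ with $\delta=(q+\mu_3)/\alpha$, and since the relevant interval consists of positive $q$ one has $\delta\neq0$; for d) one computes $\vec{M}_4+\vec{\mu}=(0,\beta,\tfrac{I_2\mu_3}{I_2-I_3})=\tfrac{I_2}{\theta}\,\vec{\gamma}_4$, so $\delta=I_2/\theta\neq0$. Lemma~\ref{M-invariant} then produces the invariant affine subspace $\mathcal{M}_{(\vec{M}_e,\vec{\gamma}_e)}=\{s=\vec{0}\}$ with $s(\vec{M},\vec{\gamma})=\vec{M}-\delta\vec{\gamma}+\vec{\mu}$, whose partial Jacobians $\partial s/\partial\vec{M}=\mathrm{Id}$ and $\partial s/\partial\vec{\gamma}=-\delta\,\mathrm{Id}$ are both invertible precisely because $\delta\neq0$. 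Theorem~\ref{instability} therefore applies and delivers the $\vec{\gamma}$-instability, completing all four items.
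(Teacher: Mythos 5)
Your proposal is correct and follows essentially the same route as the paper: the paper proves the analogous first-axis theorem and leaves this one to "analogous considerations," and your argument is exactly that transposition — Theorems \ref{partial-semi-decoupled-generality} and \ref{Lyapunov} for the stable cases, the Casimir $C_1$ when $\vec{\gamma}_e=\vec{0}$, and Lemma \ref{M-invariant} combined with Theorem \ref{instability} for the instability cases. Your explicit verifications (the positivity of $q$ on $(\tfrac{I_3\mu_3}{I_1-I_3},\tfrac{I_3\mu_3}{I_2-I_3}]$ ensuring $\delta\neq 0$, the identity $\vec{M}_4+\vec{\mu}=\tfrac{I_2}{\theta}\vec{\gamma}_4$, and the treatment of the endpoint $q=\tfrac{I_3\mu_3}{I_2-I_3}$ via failure of $\{H,F\}$-stability) are precisely the details the paper's analogy requires.
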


We do not have a result for the $\vec{\gamma}$-stability of the equilibrium states $(0,0,-\mu_3,0,0,\alpha)$ with $\alpha\neq 0$.
\medskip

{\bf Acknowledgments.} This work was supported by a grant of the Romanian National Authority for
Scientific Research, CNCS UEFISCDI, project number PN-II-RU-TE-2011-3-0006.


\begin{thebibliography}{99}
\bibitem{birtea-casu-comanescu} Birtea P., Ca\c{s}u I., Com\u{a}nescu D. Hamilton-Poisson formulation for the rotational motion of a rigid body in the presence of an axisymmetric force field and a gyroscopic torque. {\it Physics Letters A.} 2011; {\bf 375}: 3941-3945. DOI: 10.1016/j.physleta.2011.08.075
\bibitem{gavrilov} Gavrilov L. Non-integrability of the equations of heavy gyrostat. {\it Compositio Mathematica}. 1992; {\bf 3}: 257-291.
\bibitem{comanescu} Com\u{a}nescu D. Stability problem for the torque-free gyrostat by using algebraic methods. http://arxiv.org/PS$\_$cache/arxiv/pdf/1108/1108.6148v1.pdf
\bibitem{marsden-ratiu} Marsden J.E., Ratiu T.S. {\it Introduction to Mechanics and Symmetry, Second Edition}. Springer. 1999
\bibitem{aeyels} Aeyels D. On stabilization by means of the Energy-Casimir method. {\it Systems $\&$ Control Lett.} 1992; {\bf 18}: 325-328.
\bibitem{rouche} Rouche N., Habets P., Laloy M. {\it Stability Theory by Liapunov's Direct Method.} Springer-Verlag. 1977.
\bibitem{birtea-comanescu} Birtea P., Com\u{a}nescu D. Invariant critical sets of conserved quantities. {\it Chaos, Solitons $\&$ Fractals}. 2011; {\bf 44} (9): 693-701.
\bibitem{wittenburg} Wittenburg J. {\it Dynamics of Multibody Systems, Second Edition.} Springer, 2008.
\bibitem{puta-comanescu} Puta M., Com\u{a}nescu D. On the Rigid Body with Three Linear Controls. {\it Analele Univ. Timi\c{s}oara,  Ser. Matematic\u{a}-Informatic\u{a}}. 1997; {\bf XXXV} (1): 63-74.
\end{thebibliography}
\end{document}